\documentclass[journal,11pt,draftcls,onecolumn]{IEEEtran}
\usepackage{amsfonts}
\usepackage{amsmath}
\usepackage{graphicx}
\usepackage{url}
\usepackage{eepic, epsfig, amsmath, amssymb, latexsym, setspace, subfig}
\usepackage{rotating}
\usepackage{amsfonts}
\usepackage{amsmath}
\usepackage{graphicx}
\usepackage{nicefrac}
\usepackage{lipsum,multicol}

\newtheorem{theorem}{Theorem}[section]
\newtheorem{definition}[theorem]{Definition}
\newtheorem{lemma}[theorem]{Lemma}

\newtheorem{corollary}[theorem]{Corollary}

\numberwithin{equation}{section}

\newcommand{\qed}{\rule{7pt}{7pt}}
\newenvironment{proof}{\noindent{\bf Proof}\hspace*{1em}}{\hfill\qed\vspace{0.125in}}

\newcommand{\x}{\mathbf{x}}
\newcommand{\y}{\mathbf{y}}
\newcommand{\z}{\mathbf{z}}
\newcommand{\w}{\mathbf{w}}
\newcommand{\e}{\mathbf{e}}

\def\bfe{{\mathbf e}}

\def\bfh{{\mathbf h}}

\def\bfv{{\mathbf v}}
\def\bfw{{\mathbf w}}
\def\bfx{{\mathbf x}}
\def\bfy{{\mathbf y}}
\def\bfz{{\mathbf z}}

\hyphenation{op-tical net-works semi-conduc-tor}

\begin{document}
%
\title{Sparse Error Correction from Nonlinear Measurements with Applications in Bad Data Detection for Power Networks}

\author{Weiyu Xu, Meng Wang, Jianfeng Cai and Ao Tang \\
\thanks{Part of this paper was presented in the 50th IEEE Conference on Decision and Control 2011  \cite{CDC2011}. Weiyu Xu and Jianfeng Cai are with the University of Iowa; Meng Wang is with the Rensselaer Polytechnic Institute; and Ao Tang is with Cornell University.}}


%

\maketitle

\begin{abstract}
In this paper, we consider the problem of sparse recovery from nonlinear measurements, which has applications in state estimation and bad data detection for power networks. An iterative mixed $\ell_1$ and $\ell_2$ convex program is used to estimate the true state by locally linearizing the nonlinear measurements. When the measurements are linear, through using the almost Euclidean property for a linear subspace, we derive a new performance bound for the state estimation error under sparse bad data and additive observation noise. As a byproduct, in this paper we provide sharp bounds on the almost Euclidean property of a linear subspace, using the ``escape-through-the-mesh'' theorem from geometric functional analysis.  When the measurements are nonlinear, we give conditions under which the solution of the iterative algorithm converges to the true state even though the locally linearized measurements may not be the actual nonlinear measurements. We numerically evaluate our iterative convex programming approach to perform bad data detections in nonlinear electrical power networks problems. We are able to use semidefinite programming to verify the conditions for convergence of the proposed iterative sparse recovery algorithms from nonlinear measurements.
\end{abstract}

\IEEEpeerreviewmaketitle

\section{Introduction}
In this paper, inspired by state estimation for nonlinear electrical power networks under bad data and additive noise, we study the problem of sparse recovery from nonlinear measurements.
The static state of an electric power network can be described by the vector of bus voltage magnitudes and angles. In smart grid power networks, indirect nonlinear measurement results of these quantities are sent to the central control center, where the state estimation of electric power network was performed. On the one hand, these measurement results contain common small additive noises due to the accuracy of meters and equipments. On the other hand, more severely, these results can contain gross errors due to faulty sensors, meters and system malfunctions. In addition, erroneous communications and adversarial compromises of the meters can also introduce gross errors to the measurement quantities received by the central control center. In these scenarios, the observed measurements contain abnormally large measurement errors, called bad data, in addition to the usual additive observation noises. So the state estimation in power networks needs to detect, identify, and eliminate these large measurement errors \cite{BC,FB,MG}. While there are a series of works for dealing with outliers in \emph{linear} measurements \cite{CandesErrorCorrection, CT1}, the measurements for state estimation in power networks are \emph{nonlinear} functions of the states. This motivates us to study the general problem of state estimation from nonlinear measurements and bad data.

Suppose that we make $n$ measurements to estimate the state $\x$ described by an $m$-dimensional ($m<n$) real-numbered  vector, then these measurements can be written as an $n$-dimensional vector $\y$, which is related to the state vector through the measurement equation
\begin{equation}
\y=h(\x)+\bfv+\e,
\label{eq: powermodel}
\end{equation}
where $h(\x)$ is a set of $n$ general functions, which may be linear or a nonlinear, and $\bfv$ is the vector of additive measurement noise, and $\e$ is the vector of bad data imposed on the measurements. In this paper, we assume that $\bfv$ is an $n$-dimensional vector with i.i.d. zero mean Gaussian elements of variance $\sigma^2$. We also assume that $\e$ is a vector with at most $k$ nonzero entries, and the nonzero entries can take arbitrary real-numbered values. The sparsity $k$ of gross errors reflects the nature of bad data because generally only a few faulty sensing results are present or an adversary party may control only a few malicious meters. It is natural that a small number of sensors and meters are faulty at a certain moment; an adversary party may be only able to alter the results of a limited number of meters under his control; and communication errors of meter results are often rare.

When there are no bad data present, it is well known that the Least Square (LS) method can be used to suppress the effect of observation noise on state estimations. For this problem, we need the nonlinear LS method, where we try to find a vector $\x$ minimizing
\begin{equation}
\|\y-h(\x)\|_{2}.
\label{eq:ls}
\end{equation}
However, the LS method generally only works well when there are no bad data $\e$ corrupting the observation $\y$. If the magnitudes of bad data are large, the estimation result can be very far from the true state. So further techniques to eliminate abnormal measurements is needed when there are bad data present in the measurement results.

Bad data detection in power networks can be viewed as a sparse error detection problem, which shares similar mathematical structures as sparse recovery problems in compressive sensing \cite{ CandesErrorCorrection,CT1}. However, this problem in power networks is very different from linear sparse error detection problem \cite{CT1}. In fact, $h(\x)$ in (\ref{eq: powermodel}) is a nonlinear mapping instead of a linear mapping as in \cite{CandesErrorCorrection}.  It is the goal of this paper to provide a sparse recovery algorithm and performance analysis for sparse recovery from nonlinear measurements with applications in bad data detection for electrical power networks.

Toward this end, we first consider the simplified problem when $h(\x)$ is linear, which serves as a basis for solving and analyzing sparse recovery problems with nonlinear measurements. For this sparse recovery problem with linear measurements, a mixed least $\ell_1$ norm and $\ell_2$ norm convex program is used to simultaneously detect bad data and subtract additive noise from the observations. In our theoretical analysis of the decoding performance, we assume $h(\x)$ is a linear transformation $H\x$, where $H$ is an $n \times m$ matrix with i.i.d. standard zero mean Gaussian entries. Through using the almost Euclidean property for the linear subspace generated by $H$, we derive a new performance bound for the state estimation error under sparse bad data and additive observation noise.  In our analysis,  using the ``escape-through-a-mesh'' theorem from geometric functional analysis \cite{Gordon}, we are able to significantly improve on the bounds for the almost Euclidean property of a linear subspace, which may be interesting in a more general mathematical setting. Compared with earlier analysis on the same optimization problem in \cite{CandesErrorCorrection}, we are able to give explicit bounds on the error performance, which is generally sharper than the result in \cite{CandesErrorCorrection} in terms of recoverable sparsity.

 We then consider the nonlinear measurement setting. Generalizing the algorithm and results for linear measurements, we propose an iterative convex programming approach to perform joint noise reduction and bad data detection from nonlinear measurements. We establish conditions under which the iterative algorithm converges to the true state in the presence of bad data even when the measurements are nonlinear. We are also able to verify explicitly when the conditions hold through a semidefinite programming formulation. Our iterative convex programming based algorithm is shown to work well in this nonlinear setting by numerical examples. Compared with \cite{KV82}, which proposed to apply $\ell_1$ minimization in bad data detection in power networks, our approach offers a better decoding error performance when both bad data and additive observation noises are present. \cite{TL10a}\cite{TL10b} considered state estimations under malicious data attacks, and formulated state estimation under malicious attacks as a hypothesis testing problem by assuming a prior probability distribution on the state $\x$. In contrast, our approach does not rely on any prior information on the signal $\x$ itself, and the performance bounds hold for an arbitrary state $\x$. Compressive sensing with nonlinear measurements were studied in \cite{Blumensath} by extending the restricted isometry condition. Our sparse recovery problem is different from the compressive sensing problem considered in \cite{Blumensath} since our measurements are overcomplete and are designed to perform sparse error corrections instead of compressive sensing. Our analysis also does not rely on extensions of the restricted isometry condition.

The rest of this paper is organized as follows. In Section \ref{sec:condition}, we study joint bad data detection and denoising for linear measurements,  and derive the  performance bound on the decoding error based on the almost Euclidean property of linear subspaces. In Section \ref{sec:boundingEuclidean}, a sharp bound on the almost Euclidean property is given through the ``escape-through-mesh'' theorem.  In Section \ref{sec:evaluating}, we present explicitly computed bounds on the estimation error for linear measurements. In Section \ref{sec:nonlinear}, we propose our iterative convex programming algorithm to perform sparse recovery from nonlinear measurements and give theoretical analysis on the performance guarantee of the iterative algorithm. In Section \ref{sec:numerical}, we  present simulation results of our iterative algorithm to show its performance in power networks. 

\section{Bad Data Detection for Linear Systems}
\label{sec:condition}
In this section, we introduce a convex programming formulation to do bad data detection in linear systems, and characterize its decoding error performance. In a linear system, the corresponding $n \times 1$ observation vector in (\ref{eq: powermodel}) is $\y=H\x+\e+\bfv$, where $\x$ is an $m \times 1$ signal vector ($m<n$),  $H$ is an $n \times m$ matrix,  $\e$ is a sparse error vector with at most $k$ nonzero elements, and $\bfv$ is a noise vector with $\|\bfv\|_2 \leq \epsilon$. In what follows, we denote the part of any vector $\bfw$ over any index set $K$ as $\bfw_{K}$.

We solve the following optimization problem involving optimization variables $\x$ and $\z$,  and we then estimate the state $\x$ to be $\hat{\x}$, which is the optimizer value for $\x$.
\begin{eqnarray}
\min_{\x,\z}  &&\|\y-H\x-\z\|_{1},\nonumber \\
{\text{subject to}}&&\|\z\|_2 \leq \epsilon.
\label{eq:bus}
\end{eqnarray}
This optimization problem was proposed in a slightly different form in \cite{CandesErrorCorrection} by restricting $\z$ in the null space of $H^{T}$. We are now ready to give a theorem which bounds the decoding error performance of (\ref{eq:bus}), using the almost Euclidean property \cite{GarnaevGluskin,Kasin}.

\begin{definition} [Almost Euclidean Property]
A subspace in $R^n$ satisfies the \emph{almost Euclidean} property for a constant $\alpha \leq 1$, if
\begin{equation*}
\alpha \sqrt{n} \|\bfw\|_2 \leq \|\bfw\|_{1}
\end{equation*}
holds true for every $\bfw$ in the subspace.

\end{definition}

\begin{theorem}
Let $\y$, $H$, $\x$, $\e$ and $\bfv$ be specified as above. Suppose that the minimum nonzero singular value of $H$ is $\sigma_{\text{min}}$. Let $C$ be a real number larger than $1$, and suppose that every vector $\bfw$ in range of the matrix $H$ satisfies $C\|\bfw_K\|_{1} \leq \|\bfw_{\overline{K}}\|_{1}$ for any subset $K \subseteq \{1,2,...,n\}$ with cardinality $|K|\leq k$, where $k$ is an integer, and $\overline{K}=\{1,2,...,n\}\setminus K$. We also assume the subspace generated by $H$ satisfies the \emph{almost Euclidean} property for a constant $\alpha \leq 1$.

Then the solution $\hat{\x}$ to (\ref{eq:bus}) satisfies
\begin{equation}
\|\x-\hat{\x}\|_{2} \leq \frac{2(C+1)}{\sigma_{\text{min}} \alpha (C-1)}\epsilon.
\label{eq:bound}
\end{equation}
\label{thm:bound}
\end{theorem}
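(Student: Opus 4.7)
The plan is to carry out a standard error-analysis argument in the flavor of null-space / robust recovery proofs, using three ingredients: optimality of $\hat{\x}$, the cone condition $C\|\bfw_K\|_1 \le \|\bfw_{\overline{K}}\|_1$ on the range of $H$, and the almost Euclidean property to pass from an $\ell_1$ bound back to an $\ell_2$ bound.

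First, I would establish feasibility and use optimality. The pair $(\x,\bfv)$ is feasible in \eqref{eq:bus} with objective value $\|\y-H\x-\bfv\|_1=\|\e\|_1$, so for the optimizer $(\hat{\x},\hat{\z})$ we have $\|\y-H\hat{\x}-\hat{\z}\|_1\le \|\e\|_1$. Introducing the ``error-in-range'' vector $\bfw := H(\hat{\x}-\x)$ and the ``noise residual'' $\bfd := \hat{\z}-\bfv$, so that $\|\bfd\|_2\le 2\epsilon$, a direct substitution gives $\y-H\hat{\x}-\hat{\z} = \e-\bfw-\bfd$ and hence $\|\e-\bfw-\bfd\|_1\le \|\e\|_1$.

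Next I would split the $\ell_1$ norm across $K := \mathrm{supp}(\e)$ and $\overline{K}$, and apply the reverse/forward triangle inequalities in the standard way: on $K$, $\|\e_K - \bfw_K - \bfd_K\|_1 \ge \|\e_K\|_1 - \|\bfw_K\|_1 - \|\bfd_K\|_1$, and on $\overline{K}$, $\|\bfw_{\overline{K}}+\bfd_{\overline{K}}\|_1\ge \|\bfw_{\overline{K}}\|_1-\|\bfd_{\overline{K}}\|_1$. Combining and canceling $\|\e_K\|_1=\|\e\|_1$ yields $\|\bfw_{\overline{K}}\|_1\le \|\bfw_K\|_1+\|\bfd\|_1$. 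Since $\bfw\in \mathrm{range}(H)$ and $|K|\le k$, the cone hypothesis gives $C\|\bfw_K\|_1\le \|\bfw_{\overline{K}}\|_1$, so $(C-1)\|\bfw_K\|_1\le \|\bfd\|_1$ and therefore $\|\bfw\|_1=\|\bfw_K\|_1+\|\bfw_{\overline{K}}\|_1\le \frac{C+1}{C-1}\|\bfd\|_1$.

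Then I would invoke the almost Euclidean property to convert this into a bound on $\|\bfw\|_2$. Using $\alpha\sqrt{n}\,\|\bfw\|_2\le \|\bfw\|_1$ on the left, and Cauchy--Schwarz $\|\bfd\|_1\le \sqrt{n}\,\|\bfd\|_2\le 2\sqrt{n}\,\epsilon$ on the right, gives $\|\bfw\|_2\le \frac{2(C+1)}{\alpha(C-1)}\epsilon$. Finally, since $\bfw=H(\hat{\x}-\x)$ and $\hat{\x}-\x\in\R^m$ with $m<n$, the minimum nonzero singular value bound yields $\|\bfw\|_2\ge \sigma_{\min}\|\hat{\x}-\x\|_2$, from which \eqref{eq:bound} follows immediately.

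The computational content is elementary; the only place requiring care is the bookkeeping in the triangle-inequality step, where it is essential to keep the ``perturbation by $\bfd$'' on both $K$ and $\overline{K}$ in the right direction so that $\|\bfd\|_1$ (not just $\|\bfd_K\|_1$) appears on the right-hand side. That is the sole substantive obstacle; once that inequality is clean, the cone condition and the almost Euclidean property combine mechanically to produce the stated bound, and the implicit assumption that $H$ has full column rank (so $\hat{\x}-\x$ lies in a subspace on which $\sigma_{\min}$ controls the norm) is the natural setting in which the theorem is stated.
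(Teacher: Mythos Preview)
Your proof is correct and follows essentially the same route as the paper's: optimality of $(\hat{\x},\hat{\z})$ against the feasible pair $(\x,\bfv)$, a triangle-inequality split over $K=\mathrm{supp}(\e)$ and $\overline{K}$, the cone condition to convert this into a bound on $\|\bfw\|_1$, then the almost Euclidean property and the minimum singular value to finish. The only cosmetic difference is that you package the noise terms as $\bfd=\hat{\z}-\bfv$ with $\|\bfd\|_2\le 2\epsilon$, whereas the paper bounds $\|\hat{\z}\|_1$ and $\|\bfv\|_1$ separately by $\sqrt{n}\,\epsilon$ each; both lead to the identical $2\sqrt{n}\,\epsilon$ and the same final constant.
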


\begin{proof}
Suppose that one optimal solution pair to (\ref{eq:bus}) is $(\hat{\bfx},\hat{\bfz})$. Since $\|\hat{\bfz}\|_{2} \leq \epsilon$, we have $\|\hat{\bfz}\|_{1} \leq \sqrt{n}\|\hat{\bfz}\|_{2} \leq  \sqrt{n} \epsilon $.

Since $\x$ and $\bfz=\bfv$ are feasible for (\ref{eq:bus}) and $\y=H\x+\e+\bfv$, then
\begin{eqnarray*}
&&\|\y- H\hat{\x}-\hat{\z}\|_{1}\\
&=&\|H(\x-\hat{\x})+\e+\bfv-\hat{\z}\|_{1}\\
&\leq& \|H(\x-\x)+\e+\bfv-\bfv\|_{1}\\
&=&\|\e\|_{1}.
\end{eqnarray*}

Applying the triangle inequality to $\|H(\x-\hat{\x})+\e+\bfv-\hat{\z}\|_{1}$, we further obtain
\begin{equation*}
 \|H(\x-\hat{\x})+\e\|_{1}-\|\bfv\|_{1}-\|\hat{\bfz}\|_1 \leq \|\e\|_1.
\end{equation*}

Denoting $H(\x-\hat{\x})$ as $\w$, because $\e$ is supported on a set $K$ with cardinality $|K| \leq k$, by the triangle inequality for $\ell_1$ norm again,
\begin{equation*}
 \|\e\|_{1}-\|\w_{K}\|_1+\|\w_{\overline{K}}\|_1-\|\bfv\|_{1}-\|\hat{\z}\|_1\leq \|\e\|_1.
\end{equation*}

So we have
\begin{equation}
 -\|\w_{K}\|_1+\|\w_{\overline{K}}\|_1 \leq \|\hat{\z}\|_1+\|\bfv\|_{1} \leq 2\sqrt{n}\epsilon
\label{eq:differencebounded}
\end{equation}

With $C\|\w_K\|_{1} \leq \|\w_{\overline{K}}\|_{1}$, we know
\begin{equation*}
 \frac{C-1}{C+1} \|\w\|_{1} \leq -\|\w_{K}\|_1+\|\w_{\overline{K}}\|_1.
\end{equation*}

Combining this with (\ref{eq:differencebounded}), we obtain
\begin{equation*}
 \frac{C-1}{C+1} \|\w\|_{1} \leq 2\sqrt{n}\epsilon.
\end{equation*}

By the almost Euclidean property $\alpha \sqrt{n} \|\w\|_2 \leq \|\w\|_{1}$, it follows:
\begin{equation}
\|\w\|_{2} \leq \frac{2(C+1)}{\alpha (C-1)}\epsilon.
\label{eq:wl2norm}
\end{equation}

By the definition of singular values,
\begin{equation}
\sigma_{\text{min}} \|\x-\hat{\x}\|_2 \leq \|H(\x-\hat{\x})\|_2=\|\w\|_2,
\end{equation}
so combining (\ref{eq:wl2norm}), we get
\begin{equation*}
\|\x-\hat{\x}\|_{2} \leq \frac{2(C+1)}{\sigma_{\text{min}} \alpha (C-1)}\epsilon.
\end{equation*}
\end{proof}

Note that when there are no sparse errors present, the decoding error bound using the standard LS method satisfies $\|\x-\hat{\x}\|_{2} \leq \frac{1}{\sigma_{\text{min}} }\epsilon$ \cite{CandesErrorCorrection}. Theorem \ref{thm:bound} shows that the decoding error bound of (\ref{eq:bus}) is oblivious to the amplitudes of these bad data. This phenomenon was also observed in \cite{CandesErrorCorrection} by using the restricted isometry condition for compressive sensing.

We remark that, for given $\y$ and $\epsilon$, by strong Lagrange duality theory, the solution $\hat{\x}$ to (\ref{eq:bus}) corresponds to the solution to $\x$ in the following problem (\ref{eq:lambdaduality}) for some Lagrange dual variable $\lambda \geq 0$. 
\begin{equation}\label{eq:lambdaduality}
\min_{\bfx, \bfz} \quad \|\bfy-H\bfx-\bfz\|_1 +\lambda \|\bfz\|_2.
\end{equation}
In fact, when $\lambda \rightarrow \infty$, the optimizer $\|\z\|_2 \rightarrow 0$, and (\ref{eq:lambdaduality})
approaches
\begin{equation}\nonumber
\min_{\bfx} \quad \|\bfy-H\bfx\|_1,
\label{eq:appl1}
\end{equation}
and when $\lambda \rightarrow 0$, the optimizer $\z \rightarrow \y-H\x$, and (\ref{eq:lambdaduality}) approaches
\begin{equation}\nonumber
\min_{\bfx} \quad \|\bfy-H\bfx\|_2.
\end{equation}

 In the next two sections, we aim at explicitly computing
$\frac{2(C+1)}{\sigma_{\text{min}} \alpha (C-1)} \times \sqrt{n}$ appearing in the error bound (\ref{eq:bound}),
which is subsequently denoted as $\varpi$ in this paper. The appearance of
the $\sqrt{n}$ factor is to compensate for the energy scaling of
large random matrices and its meaning will be clear in later
context. We first compute explicitly the almost Euclidean
property constant $\alpha$, and then use the almost Euclidean property to get a direct estimate of the constant $C$ in the error bound (\ref{eq:bound}).

\section{Bounding the Almost Euclidean Property}
\label{sec:boundingEuclidean}
In this section, we would like to give a quantitative bound on the
almost Euclidean property constant $\alpha$ such that with high
probability (with respect to the measure for the subspace generated
by random $H$), $\alpha \sqrt{n} \|\w\|_2 \leq \|\w\|_{1}$ holds
for every vector $\w$ from the subspace generated by $H$. Here we
assume that each element of $H$ is generated from the standard
Gaussian distribution $N(0,1)$. Hence the subspace generated by $H$ is
a uniformly distributed $m$-dimensional subspace.

 To ensure that the subspace generated from $H$ satisfies the almost Euclidean property with $\alpha>0$, we must have the event that the subspace generated by $H$ does not intersect the set $\{\w\in \mathbb{S}^{n-1}| \|\w\|_{1}< \alpha \sqrt{n} \|\w\|_2=\alpha \sqrt{n}\}$, where $\mathbb{S}^{n-1}$ is the unit Euclidean sphere in $R^n$. To evaluate the probability that this event happens, we will need the following ``escape-through-mesh'' theorem.
\begin{theorem} [Escape through the mesh \cite{Gordon}]
\label{thm:escapethroughmesh}
Let $S$ be a subset of the unit Euclidean sphere $\mathbb{S}^{n-1}$ in $R^n$. Let $Y$ be a random $m$-dimensional subspace of $R^{n}$, distributed uniformly in the Grassmanian with respect to the Haar measure. Let us further take $w(S)$=$E(\sup_{\w\in S}$$(\bfh^T\bfw))$, where $\bfh$ is a random column vector in $R^{n}$ with i.i.d. $N(0,1)$ components. Assume that $w(S) < (\sqrt{n-m}-\frac{1}{2\sqrt{n-m}})$. Then
\begin{equation*}
P(Y \bigcap S=\emptyset)>1-3.5 e^{-\frac{(\sqrt{n-m}-\frac{1}{2\sqrt{n-m}})-w(S)}{18}}.
\end{equation*}
\end{theorem}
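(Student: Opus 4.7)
The plan is to realize the random $m$-dimensional subspace $Y$ as the kernel of a random $(n-m)\times n$ matrix $G$ whose entries are i.i.d.\ $N(0,1)$; by rotational invariance of the Gaussian law, $\ker G$ is then uniformly distributed on the Grassmanian of $m$-dimensional subspaces of $R^n$, so one can substitute this concrete construction for $Y$. The event $\{Y\cap S=\emptyset\}$ is then equivalent to the event $\{\inf_{\bfw\in S}\|G\bfw\|_2>0\}$, so it suffices to establish a high-probability lower bound on the random variable $\inf_{\bfw\in S}\|G\bfw\|_2$.

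I would rewrite this infimum via the variational identity
\[
\inf_{\bfw\in S}\|G\bfw\|_2=\inf_{\bfw\in S}\sup_{\bfu\in \mathbb{S}^{n-m-1}} \bfu^{T}G\bfw,
\]
which exhibits it as the min--max of the centered Gaussian process $X_{\bfw,\bfu}=\bfu^T G\bfw$. The core step is then Gordon's Gaussian min--max comparison theorem (a refinement of Slepian's inequality), applied with the auxiliary process $Z_{\bfw,\bfu}=\bfg^T\bfu+\bfh^T\bfw$, where $\bfg\in R^{n-m}$ and $\bfh\in R^n$ are independent standard Gaussian vectors. A direct covariance computation verifies the requisite pairwise inequalities between $X$ and $Z$, yielding
\[
E\!\left[\inf_{\bfw\in S}\|G\bfw\|_2\right]\ \ge\ E\!\left[\inf_{\bfw\in S}\sup_{\bfu}(\bfg^T\bfu+\bfh^T\bfw)\right]=E\|\bfg\|_2-w(S).
\]
A standard integral estimate for the norm of a Gaussian vector gives $E\|\bfg\|_2\ge \sqrt{n-m}-\tfrac{1}{2\sqrt{n-m}}$, so the hypothesis $w(S)<\sqrt{n-m}-\tfrac{1}{2\sqrt{n-m}}$ makes this lower bound strictly positive.

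To upgrade this in-expectation statement to a tail bound I would appeal to Gaussian concentration. The map $G\mapsto \inf_{\bfw\in S}\|G\bfw\|_2$ is $1$-Lipschitz in the Frobenius norm on $G$, since it is an infimum over $\bfw\in \mathbb{S}^{n-1}$ of the $1$-Lipschitz maps $G\mapsto \|G\bfw\|_2$. The Gaussian Lipschitz concentration inequality then yields
\[
P\!\left(\inf_{\bfw\in S}\|G\bfw\|_2\le E\bigl[\inf_{\bfw\in S}\|G\bfw\|_2\bigr]-t\right)\le \exp(-t^2/2),
\]
and choosing $t$ to close the gap down to $0$ produces an exponential decay in the gap $[\sqrt{n-m}-\tfrac{1}{2\sqrt{n-m}}]-w(S)$. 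Matching this to the specific constants $3.5$ and $1/18$ stated in the theorem requires tracking the combinatorial and rounding constants introduced in Gordon's original coupling of the comparison and concentration arguments.

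The main obstacle is the Gordon comparison step itself. Verifying the covariance inequalities $E(X_{\bfw,\bfu}-X_{\bfw',\bfu'})^2\le E(Z_{\bfw,\bfu}-Z_{\bfw',\bfu'})^2$, together with the equality condition on the diagonal $\bfw=\bfw'$ required to pass through a min--max rather than just a minimum, is the technical heart of the argument and is not a short calculation; it relies on the Kahane--Slepian--Gordon chain of Gaussian comparison inequalities. A secondary difficulty is that the concentration step naturally produces a Gaussian tail of the form $\exp(-t^2/2)$, so obtaining exactly the numerical constants shown in the statement is a bookkeeping issue rather than a strengthening of the underlying geometric content.
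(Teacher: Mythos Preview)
The paper does not prove this theorem; it is quoted from Gordon's work \cite{Gordon} and used as a black box to bound the almost Euclidean constant. So there is no ``paper's own proof'' to compare against. Your sketch is the standard route to Gordon's result --- realize $Y$ as $\ker G$ for a Gaussian matrix $G$, apply Gordon's min--max comparison between $X_{\bfw,\bfu}=\bfu^T G\bfw$ and the decoupled process $Z_{\bfw,\bfu}=\bfg^T\bfu+\bfh^T\bfw$, then concentrate --- and it is essentially correct in outline.

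One point worth flagging: the exponent in the stated bound is linear in the gap $(\sqrt{n-m}-\tfrac{1}{2\sqrt{n-m}})-w(S)$, not quadratic, whereas the Gaussian Lipschitz concentration you invoke gives $\exp(-t^2/2)$. The form quoted in the paper matches Gordon's original statement, and the particular constants $3.5$ and $1/18$ come from his specific argument rather than from the cleaner ``comparison in expectation plus separate concentration'' decomposition you describe. So your sketch would naturally land on a bound of the shape $1-C\exp(-c\,t^2)$ with $t$ the gap; getting the exact constants and the linear exponent as stated would require going back to Gordon's original coupling, as you correctly note at the end.
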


 We derive the following upper bound of $w(\bfh, S)=\sup_{\w\in S}(\bfh^T\bfw)$ for an arbitrary but fixed $\bfh$. Because the set $\{\w\in S^{n-1}| \|\w\|_{1}< \alpha \sqrt{n} \|\w\|_2 \}$ is symmetric, without loss of generality, we assume that the elements of $\bfh$ follow i.i.d.  half-normal distributions, namely the distribution for the absolute value of a standard zero mean Gaussian random variables. With $h_i$ denoting the $i$-th element of $\bfh$, $\sup_{\w\in S}(\bfh^T\bfw)$ is equivalent to
\begin{eqnarray}
\label{eq:maxproductoptimization}
\max && \sum_{i=1}^{n} h_{i} w_{i}\nonumber\\
{\text{subject to}}&& w_{i} \geq 0, 1\leq i \leq n\nonumber\\
&&\sum_{i=1}^{n}w_{i} \leq \alpha \sqrt{n}\nonumber\\
&&\sum_{i=1}^{n} w_{i}^2=1\nonumber.
\end{eqnarray}

Following the method from \cite{StojnicThresholds}, we use the Lagrange duality to find an upper bound for the objective function of (\ref{eq:maxproductoptimization}):
\begin{eqnarray}
&&\min_{u_{1} \geq 0, u_{2}\geq 0, \lambda \geq 0}\max_{w}\bfh^{T} \w-u_{1}(\sum_{i=1}^{n}w_i^2-1)\nonumber\\
&&-u_2(\sum_{i=1}^{n}w_{i}-\alpha \sqrt{n})+\sum_{i=1}^{n} \lambda_{i} w_{i},
\label{eq:maxmin}
\end{eqnarray}
where $\lambda$ is a vector $(\lambda_{1}, \lambda_{2}, ..., \lambda_{n})$. Note that restricting $u_1$ to be nonnegative still gives an upper bound even though it corresponds to an equality.

First, we maximize (\ref{eq:maxmin}) over $w_{i}$, $i=1,2, ..., n$ for fixed $u_{1}$, $u_{2}$ and $\lambda$. By setting the derivatives to be zero, the maximizing $w_{i}$ is given by
\begin{equation*}
w_{i}=\frac{h_{i}+\lambda_{i}-u_{2}}{2u_{1}},  1\leq i \leq n
\end{equation*}

Plugging this back into the objective function in (\ref{eq:maxmin}), we get
\begin{eqnarray}
 &&\bfh^{T} \w-u_{1}(\sum_{i=1}^{n}w_i^2-1)\nonumber\\
 &&-u_2(\sum_{i=1}^{n}w_{i}-\alpha \sqrt{n})+\sum_{i=1}^{n} \lambda_{i} w_{i}\nonumber\\
 &&=\frac{\sum_{i=1}^{n}{(-u_2+\lambda_{i}+h_i)^2}}{4 u_{1}}+u_1+\alpha \sqrt{n}u_2.
 \label{eq:insidemin}
\end{eqnarray}

Next, we minimize (\ref{eq:insidemin}) over $u_1 \geq 0$. It is not
hard to see the minimizing $u_{1}^*$ is
\begin{equation*}
 u_{1}^*=\frac{\sqrt{\sum_{i=1}^{n}{(-u_2+\lambda_{i}+h_i)^2}}}{2},
\end{equation*}
and the corresponding minimized value is
\begin{equation}\label{eqn:insidemin2}
{\sqrt{\sum_{i=1}^{n}{(-u_2+\lambda_{i}+h_i)^2}}}+\alpha \sqrt{n}
u_2.
\end{equation}

Then, we minimize (\ref{eqn:insidemin2}) over $\lambda \geq 0$. Given $\bfh$ and $u_2 \geq 0$, it is easy to see that the minimizing
$\lambda$ is

\[ \lambda_i = \left\{ \begin{array}{ll}
         u_2-h_i & \mbox{if $h_i \leq u_2$};\\
        0 & \mbox{otherwise},\end{array} \right. \]
and the corresponding minimized value is
\begin{equation}\label{eq:upperboundinstance}
\sqrt{\sum_{1 \leq i \leq n:\\ h_i >u_2}(u_2-h_i)^2}+\alpha
\sqrt{n}u_2.
\end{equation}

Now if we take any $u_2 \geq 0$, (\ref{eq:upperboundinstance}) serves
as an upper bound for (\ref{eq:maxmin}), and thus also an upper bound for $\sup_{\w\in S}(\bfh^T\bfw)$. Since $\sqrt{\cdot}$ is a
concave function, by Jensen's inequality, we have for any given $u_2
\geq 0$,
\begin{equation}\label{eqn:upper}
E(\sup_{\w\in S}(\bfh^Tw)) \leq \sqrt{ E\{\sum_{1\leq i \leq n: h_i
>u_2}{(u_2-h_i)^2}\} }+\alpha \sqrt{n} u_2.
\end{equation}
Since $\bfh$ has i.i.d. half-normal components, the righthand
side of (\ref{eqn:upper}) equals to
\begin{equation}\label{eqn:boundu2}
(\sqrt{(u_2^2+1)\textrm{erfc}(u_2/\sqrt{2})-\sqrt{2/\pi}u_2e^{-u_2^2/2}}+\alpha
u_2)\sqrt{n},
\end{equation}
where $\textrm{erfc}$ is the complementary error function.

One can check that (\ref{eqn:boundu2}) is convex in $u_2$. Given
$\alpha$, we minimize (\ref{eqn:boundu2}) over $u_2 \geq 0$ and let
$g(\alpha)\sqrt{n}$ denote the minimum value. Then from
(\ref{eqn:upper}) and (\ref{eqn:boundu2}) we know
\begin{equation}
w(S)=E(\sup_{\w\in S}(\bfh^T\w)) \leq g(\alpha) \sqrt{n}.
\end{equation}
Given $\delta=\frac{m}{n}$, we pick the largest $\alpha^*$ such that
$g(\alpha^*) <\sqrt{1-\delta}$. Then as $n$ goes to infinity, it
holds that
\begin{equation}
w(S)\leq g(\alpha^*)\sqrt{n}<(\sqrt{n-m}-\frac{1}{2\sqrt{n-m}}).
\end{equation}
Then from Theorem \ref{thm:escapethroughmesh}, with high
probability  $\|\w\|_1 \geq \alpha^*\sqrt{n}\|\w\|_2$ holds for
every vector $\w$ in the subspace generated by $H$. We numerically
calculate how $\alpha^*$ changes over $\delta$ and plot the curve in
Fig. \ref{fig:alpha}. For example, when $\delta=0.5$,
$\alpha^*=0.332$, thus $\|\w\|_1 \geq 0.332\sqrt{n}\|\w\|_2$ for all
$\w$ in the subspace generated by $H$.

\begin{figure}[t] \centering
\includegraphics[scale=0.4]{./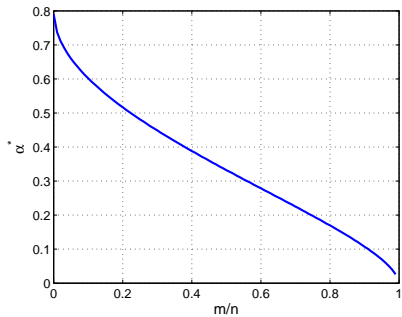}
\caption{$\alpha^*$ over $m/n$}\label{fig:alpha}
\end{figure}

Note that when $\frac{m}{n}=\frac{1}{2}$, we get $\alpha=0.332$. That is much larger than the known $\alpha$ used in \cite{Yin}, which is approximately $0.07$ (see Equation (12) in \cite{Yin}). When applied to the sparse recovery problem considered in \cite{Yin}, we are able to recover any vector with no more than $0.0289n=0.0578m$ nonzero elements, which are $20$ times more than the $\frac{1}{384}m$ bound in \cite{Yin}.

 \section{Evaluating the Robust Error Correction Bound and Comparisons with other Bounds}
\label{sec:evaluating}
If the elements in the measurement matrix $H$ are i.i.d. drawn from the Gaussian distribution $N(0,1)$, following upon the work of
Marchenko and Pastur \cite{Marcenko67}, Geman \cite{Geman80} and
Silverstein \cite{silver} proved that for $m/n=\delta$, as $n
\rightarrow \infty$, the smallest nonzero singular value
\begin{equation*}
\frac{1}{\sqrt{n}}\sigma_{\text{min}} \rightarrow 1-\sqrt{\delta}
\end{equation*}
almost surely as $n \rightarrow \infty$.

Now that we have already explicitly bounded $\alpha$ and $\sigma_{\text{min}}$, we now proceed to characterize $C$. It turns out that our earlier result on the almost Euclidean property can be used to compute $C$.
\begin{lemma}
 Suppose an $n$-dimensional vector $\w$ satisfies $\|\w\|_{1}\geq \alpha \sqrt{n} \|\w\|_2$, and for some set $K \subseteq \{1,2,...,n\}$ with cardinality $|K|=k \leq n$,
$\frac{\|\w_{K}\|_1}{\|\w\|_{1}}= \beta$.
Then $\beta$ satisfies
\begin{equation*}
\frac{\beta^2}{k}+\frac{(1-\beta)^2}{n-k} \leq \frac{1}{\alpha^2 n}.
\end{equation*}
\label{thm:almostEuclideanB}
\end{lemma}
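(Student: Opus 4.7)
The plan is to relate the $\ell_1$ masses $\|\w_K\|_1$ and $\|\w_{\overline K}\|_1$ to the corresponding $\ell_2$ energies using Cauchy--Schwarz on each piece separately, then combine the resulting $\ell_2$ lower bound with the almost Euclidean hypothesis $\|\w\|_1 \ge \alpha\sqrt{n}\|\w\|_2$.

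Concretely, I would first note that by Cauchy--Schwarz applied to the $|K|=k$ coordinates in $K$ and to the $n-k$ coordinates in $\overline K$,
\begin{equation*}
\|\w_K\|_1 \le \sqrt{k}\,\|\w_K\|_2, \qquad \|\w_{\overline K}\|_1 \le \sqrt{n-k}\,\|\w_{\overline K}\|_2.
\end{equation*}
Rearranging gives $\|\w_K\|_2^2 \ge \|\w_K\|_1^2/k$ and $\|\w_{\overline K}\|_2^2 \ge \|\w_{\overline K}\|_1^2/(n-k)$. Adding these two and using the Pythagorean identity $\|\w\|_2^2 = \|\w_K\|_2^2 + \|\w_{\overline K}\|_2^2$ (since $K$ and $\overline K$ partition the index set), I obtain
\begin{equation*}
\|\w\|_2^2 \;\ge\; \frac{\|\w_K\|_1^2}{k} + \frac{\|\w_{\overline K}\|_1^2}{n-k}.
\end{equation*}

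Next, I would substitute the hypothesis $\|\w_K\|_1 = \beta\|\w\|_1$ and $\|\w_{\overline K}\|_1 = (1-\beta)\|\w\|_1$ on the right-hand side, and the almost Euclidean bound $\|\w\|_2^2 \le \|\w\|_1^2/(\alpha^2 n)$ on the left-hand side. This yields
\begin{equation*}
\frac{\|\w\|_1^2}{\alpha^2 n} \;\ge\; \left(\frac{\beta^2}{k} + \frac{(1-\beta)^2}{n-k}\right)\|\w\|_1^2.
\end{equation*}
Dividing through by $\|\w\|_1^2$ (which is nonzero, since otherwise $\w=0$ makes the claim vacuous) gives the desired inequality.

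There is no real obstacle here: the proof is just Cauchy--Schwarz on each block plus one substitution. The only subtlety worth flagging is that the hypothesis pins down only the \emph{fraction} $\beta$ of $\ell_1$ mass on $K$, but because the $\ell_1$-to-$\ell_2$ comparison on each block is tight up to equality (achieved by flat vectors), the resulting bound on $\beta$ is sharp in the scale-free form stated.
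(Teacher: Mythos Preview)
Your proof is correct and follows essentially the same approach as the paper: apply Cauchy--Schwarz separately on $K$ and $\overline{K}$ to lower bound $\|\w\|_2^2$ by $\beta^2/k + (1-\beta)^2/(n-k)$ times $\|\w\|_1^2$, then combine with the almost Euclidean bound $\alpha^2 n\|\w\|_2^2 \le \|\w\|_1^2$. The only cosmetic difference is that the paper normalizes $\|\w\|_1=1$ at the outset rather than dividing through by $\|\w\|_1^2$ at the end.
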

\begin{proof}
Without loss of generality, we let $\|\w\|_1=1$. Then by the Cauchy-Schwarz inequality,
\begin{eqnarray*}
\|\w\|_{2}^2&=&\|\w_{K}\|_{2}^2+\|\w_{\overline{K}}\|_{2}^2\\
&\geq& (\frac{\|\w_{K}\|_{1}}{\sqrt{k}})^2 + (\frac{\|\w_{\overline{K}}\|_{1}}{\sqrt{n-k}})^2\\
&=& (\frac{\beta^2}{k}+\frac{(1-\beta)^2}{n-k})\|\w\|_{1}^2.
\end{eqnarray*}

At the same time, by the almost Euclidean property,
\begin{equation*}
\alpha^2 n \|\w\|_{2}^2 \leq \|\w\|_{1}^2,
\end{equation*}
so we must have
\begin{equation*}
\frac{\beta^2}{k}+\frac{(1-\beta)^2}{n-k} \leq \frac{1}{\alpha^2 n}.
\end{equation*}
\end{proof}

\begin{corollary}
If a nonzero $n$-dimensional vector $\w$ satisfies $\|\w\|_{1}\geq \alpha \sqrt{n} \|\w\|_2$, and if for any set $K \subseteq \{1,2,...,n\}$ with cardinality $|K|=k \leq n$, $C\|\w_{K}\|_{1}= \|\w_{\overline{K}}\|_1$ for some number $C\geq 1$, then
\begin{equation*}
\frac{1}{\frac{k}{n}}+\frac{C^2}{1-\frac{k}{n}} \leq \frac{(C+1)^2}{\alpha^2}.
\end{equation*}
\label{corollary:Cbound}
\end{corollary}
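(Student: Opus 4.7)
The corollary should follow almost immediately from Lemma \ref{thm:almostEuclideanB} by substitution, so the plan is essentially to specialize $\beta$ and rearrange.

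First I would extract $\beta$ from the hypothesis. Since $\|\w\|_1 = \|\w_K\|_1 + \|\w_{\overline{K}}\|_1$ and $\|\w_{\overline{K}}\|_1 = C\|\w_K\|_1$ by assumption, we get $\|\w\|_1 = (1+C)\|\w_K\|_1$. Therefore
\begin{equation*}
\beta \;=\; \frac{\|\w_K\|_1}{\|\w\|_1} \;=\; \frac{1}{1+C}, \qquad 1-\beta \;=\; \frac{C}{1+C}.
\end{equation*}

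Next I would invoke Lemma \ref{thm:almostEuclideanB}, whose hypotheses are exactly the almost Euclidean bound $\|\w\|_1 \geq \alpha\sqrt{n}\|\w\|_2$ together with this definition of $\beta$. The lemma yields
\begin{equation*}
\frac{\beta^2}{k}+\frac{(1-\beta)^2}{n-k} \;\leq\; \frac{1}{\alpha^2 n}.
\end{equation*}
Plugging in the values of $\beta$ and $1-\beta$ computed above gives
\begin{equation*}
\frac{1}{k(1+C)^2}+\frac{C^2}{(n-k)(1+C)^2} \;\leq\; \frac{1}{\alpha^2 n}.
\end{equation*}

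Finally I would multiply both sides by $n(1+C)^2$ to obtain
\begin{equation*}
\frac{n}{k}+\frac{C^2\, n}{n-k} \;\leq\; \frac{(1+C)^2}{\alpha^2},
\end{equation*}
which, after dividing numerator and denominator of each fraction on the left by $n$, is precisely the claimed inequality. There is no real obstacle: the only thing to be careful about is that the hypothesis is stated as an equality $C\|\w_K\|_1 = \|\w_{\overline{K}}\|_1$ (not an inequality), which is what makes $\beta$ take the clean value $1/(1+C)$ and allows the direct substitution. Since $\w$ is nonzero, none of the denominators vanish (assuming $0 < k < n$, which is implicit in the statement), so the algebraic manipulation is valid throughout.
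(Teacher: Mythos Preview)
Your proof is correct and follows essentially the same approach as the paper: compute $\beta = \frac{1}{C+1}$ from the equality hypothesis, apply Lemma~\ref{thm:almostEuclideanB}, and multiply through by $n(C+1)^2$ to rearrange. The paper's own argument is identical in structure, though slightly terser on the final algebraic step.
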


\begin{proof}
If $C\|\w_{K}\|_{1}= \|\w_{\overline{K}}\|_1$, we have
\begin{equation*}
\frac{\|\w_{K}\|_1}{\|\w\|_{1}}= \frac{1}{C+1}.
\end{equation*}
So by Lemma \ref{thm:almostEuclideanB}, $\beta=\frac{1}{C+1}$ satisfies
\begin{equation*}
\frac{\beta^2}{k}+\frac{(1-\beta)^2}{n-k} \leq \frac{1}{\alpha^2 n}.
\end{equation*}

This is equivalent to
\begin{equation*}
\frac{1}{\frac{k}{n}}+\frac{C^2}{1-\frac{k}{n}} \leq \frac{(C+1)^2}{\alpha^2}.
\end{equation*}

\end{proof}

\begin{corollary}
Let $\y$, $\x$, $H$, $\e$ and $\bfv$ be specified as above. Assume that $H$ is drawn i.i.d. Gaussian distribution $N(0,1)$ and the subspace generated by $H$ satisfies the \emph{almost Euclidean} property for a constant $\alpha \leq 1$ with overwhelming probability as $n \rightarrow \infty$.
Then for any small number $\gamma>0$, almost surely as $n \rightarrow \infty$, \emph{simultaneously} for any state $\x$ and for any sparse error $\e$ with at most $k$ nonzero elements, the solution $\hat{\x}$ to (\ref{eq:bus}) satisfies
\begin{equation}
\|\x-\hat{\x}\|_{2} \leq \frac{2(1+\gamma)(C+1)}{(\sqrt{n}-\sqrt{m})\alpha (C-1)}  \epsilon,
\label{eq:computedbound}
\end{equation}
where $C$ is the smallest nonnegative number such that $\frac{1}{\frac{k}{n}}+\frac{C^2}{1-\frac{k}{n}} \leq \frac{(C+1)^2}{\alpha^2}$ holds; and $\frac{k}{n}$ needs to be small enough such that $C>1$.
\label{corollary:boundexplicit}
\end{corollary}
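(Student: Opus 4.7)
The plan is to derive the corollary by plugging quantitative forms of $\alpha$, $\sigma_{\min}$, and $C$ from the preceding two sections into the error bound of Theorem~\ref{thm:bound}. Three ingredients feed in: (i) the almost Euclidean constant $\alpha$ guaranteed with overwhelming probability by Section~\ref{sec:boundingEuclidean}; (ii) the Marchenko--Pastur/Geman--Silverstein asymptotic $\sigma_{\min}/\sqrt{n}\to 1-\sqrt{m/n}$ almost surely, which controls the denominator in \eqref{eq:bound}; and (iii) a passage from the AE property to an explicit admissible value of $C$ in the split condition $C\|\bfw_K\|_1\leq\|\bfw_{\overline K}\|_1$ demanded by Theorem~\ref{thm:bound}.

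The main work is step (iii). I would show that for the $C$ chosen in the statement, namely the smallest nonnegative $C$ with $\frac{1}{k/n}+\frac{C^2}{1-k/n}\leq\frac{(C+1)^2}{\alpha^2}$, every $\bfw\in\mathrm{range}(H)$ and every $K$ with $|K|\leq k$ indeed satisfies $C\|\bfw_K\|_1\leq\|\bfw_{\overline K}\|_1$. Since extending $K$ only strengthens the constraint, one can assume $|K|=k$. Suppose for contradiction some $(\bfw,K)$ violates the split bound, so $C\|\bfw_K\|_1>\|\bfw_{\overline K}\|_1$. The affine map $t\mapsto t\|\bfw_K\|_1-\|\bfw_{\overline K}\|_1$ is continuous, nonpositive at $t=0$, and strictly positive at $t=C$, so by the intermediate value theorem there exists $C'\in[0,C)$ at which equality holds, $C'\|\bfw_K\|_1=\|\bfw_{\overline K}\|_1$. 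The derivation behind Corollary~\ref{corollary:Cbound} uses only $\beta=1/(C'+1)\in[0,1]$ (i.e.\ $C'\geq 0$), not the assumption $C'\geq 1$, and therefore still yields $\frac{1}{k/n}+\frac{(C')^2}{1-k/n}\leq\frac{(C'+1)^2}{\alpha^2}$, contradicting the minimality of $C$.

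With the split condition in hand, Theorem~\ref{thm:bound} gives $\|\x-\hat{\x}\|_2\leq\frac{2(C+1)}{\sigma_{\min}\alpha(C-1)}\epsilon$, which is meaningful because the hypothesis on $k/n$ forces $C>1$. For any $\gamma>0$, the Geman--Silverstein limit guarantees that almost surely, for all sufficiently large $n$, $\sigma_{\min}\geq(\sqrt{n}-\sqrt{m})/(1+\gamma)$; intersecting this event with the overwhelming-probability occurrence of the AE property produces the asserted bound $\|\x-\hat{\x}\|_2\leq\frac{2(1+\gamma)(C+1)}{(\sqrt{n}-\sqrt{m})\alpha(C-1)}\epsilon$. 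The ``simultaneous'' clause is automatic: the AE constant, the split inequality with the chosen $C$, and the singular-value lower bound are all properties of $H$ alone, so once they hold the error estimate applies uniformly to every $\x$ and every $k$-sparse $\e$.

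The delicate step is (iii): Corollary~\ref{corollary:Cbound} offers only the one-way implication ``equality on the $\ell_1$ split $\Rightarrow$ inequality on $C$,'' whereas Theorem~\ref{thm:bound} needs the converse-flavored statement ``a prescribed $C$ forces the split inequality for every $(\bfw,K)$.'' The continuity/contradiction argument sketched above is what closes that gap; beyond it, the remaining work is routine substitution of the asymptotic formulas for $\sigma_{\min}$ and of the AE constant.
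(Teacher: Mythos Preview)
Your proposal is correct and follows the paper's approach: the paper's own proof is the one-liner ``This follows from Corollary~\ref{corollary:Cbound} and $\frac{1}{\sqrt{n}}\sigma_{\min}\to 1-\sqrt{\delta}$ as $n\to\infty$,'' which is exactly your steps (i)--(iii) fed into Theorem~\ref{thm:bound}. Your continuity/minimality argument in step~(iii) is a valid way to close the direction the paper leaves implicit---Corollary~\ref{corollary:Cbound} is stated only as the forward implication, and the paper does not spell out why the chosen $C$ then satisfies the split hypothesis of Theorem~\ref{thm:bound}.
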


\begin{proof}
This follows from \ref{corollary:Cbound} and $\frac{1}{\sqrt{n}}\sigma_{\text{min}} \rightarrow 1-\sqrt{\delta}$ as $n\rightarrow \infty$.
\end{proof}

So for a sparsity ratio $\frac{k}{n}$, we can use the procedure in Section \ref{sec:boundingEuclidean} to calculate the constant $\alpha$ for the almost Euclidean property. Then we can use $\alpha$ to find the value for $C$ in Corollary \ref{corollary:boundexplicit}. In Figure \ref{fig:varpi}, we plot $\frac{2(C+1)}{\sigma_{\text{min}} \alpha (C-1)} \sqrt{n}=\varpi$ appearing in Corollary \ref{corollary:boundexplicit} for $\delta=\frac{m}{n}=\frac{1}{2}$ as a function $\frac{k}{n}$. Apparently, when the sparsity $\frac{k}{n}$ increases, the recovery error bound also increases.

\begin{figure}[t]
\centering
\includegraphics[scale=0.4]{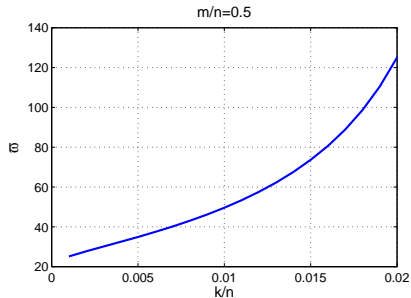}
\caption{$\varpi$ versus $\frac{k}{n}$}\label{fig:varpi}
\end{figure}

\subsection{Comparisons with Existing Performance Bounds}
In this subsection, we would like to explore the connection between robust sparse error correction and compressive sensing, and compare our performance bound with the bounds in the compressive sensing literature.

As already noted in \cite{CandesErrorCorrection} and \cite{CT1}, sparse error correction can be seen as a dual to the compressive sensing problem. If we multiply $\y=H\x+\e+\mathbf{v}$ with an $(n-m)\times n$ matrix $A$ satisfying $A\times H=0$,
\begin{equation*}
A\y=AH\x+A\e+A\mathbf{v}=A\e+A\mathbf{v}.
\end{equation*}
Since $\e$ is a sparse vector, this corresponds to a compressive sensing problem with sensing matrix $A$, observation result $A\y$ and observation noise $A\mathbf{v}$.

\cite{Neighborlypolytope} and \cite{DonohoTanner} used the theory of high dimensional convex geometry to establish the phase transition on the recoverable sparsity for perfectly sparse signals under noiseless observations. Compared with \cite{Neighborlypolytope} and \cite{DonohoTanner}, our method using the almost Euclidean property applies to the setting of noisy observations. \cite{XuHassibi} also used the convex geometry tool to get the precise stability phase transition for compressive sensing. However, \cite{XuHassibi} mainly focused on obtaining $\ell_1$-$\ell_1$ error bounds, namely the recovery error measured in $\ell_1$ norm is bounded in terms of the $\ell_1$ norm of perturbations. No explicitly computed error bound measured in $\ell_2$ norm was given in \cite{XuHassibi}.

In a remarkable paper \cite{DonohoMalekiMontanari}, precise noise-sensitivity phase transition was provided for compressive sensing problems with Gaussian observation noises. In \cite{DonohoMalekiMontanari}, expected recovery error in $\ell_2$ norm was considered and the average was taken over the signal input distribution and the Gaussian observation noise.
Compared with \cite{DonohoMalekiMontanari}, this paper derives a \emph{worst-case }bound on the recovery error which holds true for \emph{all} $k$-sparse vectors \emph{simultaneously}; the performance bound also applies to arbitrary observation noises, including but not limited to Gaussian observation noise. Compared with another worst-case recovery error bound obtained in \cite{CandesErrorCorrection} through restricted isometry condition, the bound in this paper greatly improves on the sparsity up to which robust error correction happens. To our best knowledge, currently the best bound on the sparsity for a restricted isometry to hold is still very small. For example, when $\frac{n-m}{n}=\frac{1}{2}$, the proven sparsity $\frac{k}{n}$ for strong $\ell_0$-$\ell_1$ equivalence is about $1.5 \times 10^{-3}$ according to \cite{blanchard}, which is smaller than the bound $\frac{k}{n}=0.0289$ obtained in this paper. In fact, as illustrated in Figure \ref{fig:alpha}, the almost Euclidean bound exists for any $\frac{m}{n} <1$. So combined with the recovery error bound in $\ell_1$ norm, we can show the recovery error is bounded in $\ell_2$ norm up to the stability sparsity threshold in \cite{XuHassibi}. In this paper, we also obtain much sharper bounds on the almost Euclidean property than known in the literature \cite{Yin}.

\section{Sparse Error Correction from Nonlinear Measurements}
\label{sec:nonlinear}
In applications, measurement outcome can be nonlinear functions of system states. Let us denote the $i$-th measurement by $h_{i}(\x)$, where $1\leq i\leq n$ and $h_{i}(\x)$ can be a nonlinear function of $\x$. In this section, we study the theoretical performance guarantee of sparse recovery from nonlinear measurements and give an iterative algorithm to do sparse recovery from nonlinear measurements, for which we provide conditions under which the iterative algorithm converges to the true state. 

In Subsection \ref{subsec:theory}, we explore the conditions under which sparse recovery from nonlinear measurements are theoretically possible. In Subsection \ref{subsec:algorithmdescription}, we describe our iterative algorithm to perform sparse recovery from nonlinear measurements. In Subsection \ref{subsec:algorithm}, we study the algorithm performance guarantees when the measurements are with or without additive noise.

\subsection{Theoretical Guarantee for Direct $\ell_0$ and $\ell_1$-Minimization}
\label{subsec:theory}
We first give a general condition which guarantees recovering correctly the state $\x$ from the corrupted observation $\y$ without considering the computational cost.

\begin{theorem}
Let $h(\cdot)$, $\x$, and $\e$ be specified as above; and $\y=h(\x)+\e$. A state $\x$ can be recovered correctly from any error $\e$ with $\|\e\|_0 \leq k$ from solving the optimization
 \begin{equation}\label{eqn:norm0}
\min_{\bfx} \quad \|\bfy-h(\bfx)\|_0,
\end{equation}
if and only if for any $\x^* \neq \x$, $\|h(\x)-h(\x^*)\|_{0} \geq 2k+1$.
\label{thm:generalnonlinear}
\end{theorem}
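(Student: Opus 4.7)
The theorem is a nonlinear analog of the classical minimum-distance characterization for unique decoding in coding theory, so my plan is to run the standard two-direction argument.

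For the sufficiency direction, I would assume the separation condition $\|h(\x)-h(\x^*)\|_0 \ge 2k+1$ for every $\x^* \neq \x$ and show that $\x$ is the unique minimizer of \eqref{eqn:norm0}. The key identity is $\y - h(\x^*) = \bigl(h(\x)-h(\x^*)\bigr) + \e$. Using the triangle-type inequality $\|\bfa + \bfb\|_0 \ge \|\bfa\|_0 - \|\bfb\|_0$ on the supports, together with $\|\e\|_0 \le k$, I would conclude $\|\y - h(\x^*)\|_0 \ge (2k+1) - k = k+1$. On the other hand, $\|\y - h(\x)\|_0 = \|\e\|_0 \le k$. Hence no $\x^* \neq \x$ can match or beat the objective value attained by $\x$, so $\x$ is the unique optimum.

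For the necessity direction, I would argue by contrapositive: if there exists $\x^* \neq \x$ with $\|h(\x)-h(\x^*)\|_0 \le 2k$, then exact recovery can fail for some admissible $\e$. I would split the support of the difference vector $\bfd := h(\x)-h(\x^*)$ into two disjoint subsets $S_1, S_2$ with $|S_1|, |S_2| \le k$, and define $\e$ by $\e_{S_1} = -\bfd_{S_1}$ and $\e$ zero elsewhere. Then $\|\e\|_0 \le k$, which is an admissible error. Setting $\e' := \e - \bfd$, one verifies $\e'$ is supported on $S_2$ so $\|\e'\|_0 \le k$, and moreover $\y = h(\x) + \e = h(\x^*) + \e'$. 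Consequently, both $\x$ and $\x^*$ achieve objective value at most $k$ in \eqref{eqn:norm0}, so the optimizer is not unique and correct recovery from this $\e$ is not guaranteed.

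There is essentially no obstacle of a technical nature here — no continuity, smoothness, or structural assumption on $h$ is needed, because the argument is purely combinatorial in the supports. The one subtle point worth flagging explicitly is the interpretation of ``recovered correctly'' in the statement: in the necessity direction the constructed counterexample produces a tie rather than an incorrect strict minimum, so I would phrase the conclusion as failure of \emph{unique} recovery (matching the ``if and only if'' statement, under which any tie-breaking rule could select $\x^*$ over $\x$). With that convention fixed, the two directions combine to give the claimed equivalence.
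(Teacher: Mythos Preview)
Your approach is correct and essentially identical to the paper's: both directions use the same support-counting triangle inequality for sufficiency and the same ``cancel half the difference with the error'' construction for necessity, and your remark on tie-breaking matches the paper's observation that $\x$ fails to be the \emph{unique} minimizer. One small slip: with $\bfd = h(\x)-h(\x^*)$ and $\e_{S_1} = -\bfd_{S_1}$, the identity $\y = h(\x^*) + \e'$ forces $\e' = \bfd + \e$, not $\e - \bfd$ as you wrote; with that sign corrected, your verification that $\e'$ is supported on $S_2$ goes through exactly as stated.
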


\begin{proof}
We first prove the sufficiency part, namely if for any $\x^* \neq \x$, $\|h(\x)-h(\x^*)\|_{0} \geq 2k+1$, we can always correctly recover $\x$ from $\y$ corrupted with any error $\e$ with $\|\e\|_0 \leq k$.  Suppose that instead an solution to the optimization problem (\ref{eqn:norm0}) is an $\x^* \neq \x$.
Then
\begin{eqnarray*}
&&\|\y- h(\x^*)\|_{0}\\
&=&\|(h(\x)+\e)- h(\x^*)\|_{0}\\
&\geq& \|h(\x)- h(\x^*)\|_{0}-\|\e\|_{0}\\
&\geq& (2k+1)-k\\
&>& \|\e\|_0=\|\y- h(\x)\|_{0}.
\end{eqnarray*}

So $\x^* \neq \x$ can not be a solution to (\ref{eqn:norm0}), which is a contradiction.

For the necessary part, suppose that there exists an $\x^* \neq \x$ such that $\|h(\x)-h(\x^*)\|_{0} \leq 2k$.  Let $I$ be the index set where $h(\x)$ and $h(\x^*)$ differ and its size $|I| \leq 2k$. Let $\gamma=h(\x^*)-h(\x)$.  We pick $\e$ such that $\e_i=\gamma_i$, $\forall i \in I'$, where $I'\subseteq I$ is an index set with cardinality $|I'|=k$; and $\e_i$ to be $0$ otherwise. Then
\begin{eqnarray*}
&&\|\y- h(\x^*)\|_{0}\\
&=& \|h(\x)- h(\x^*)+\e\|_{0}\\
&=& |I|-k\\
&\leq& k=\|\e\|_0=\|\y-h(\x)\|_0,
\end{eqnarray*}
which means that $\x$ can not be a solution to (\ref{eqn:norm0}) and is certainly not a unique solution to (\ref{eqn:norm0}).
\end{proof}

\begin{theorem}
Let $h(\cdot)$, $\x$, and $\e$ be specified as above; and $\y=h(\x)+\e$. A state $\x$ can be recovered correctly from any error $\e$ with $\|\e\|_0 \leq k$ from solving the optimization
 \begin{equation}\label{eqn:norm1}
\min_{\bfx} \quad \|\bfy-h(\bfx)\|_1,
\end{equation}
 if and only if for any $\x^* \neq \x$, $\|(h(\x)-h(\x^*))_{K}\|_{1} < \|(h(\x)-h(\x^*))_{\overline{K}}\|_{1}$, where $K$ is the support of the error vector $\e$.
\label{thm:l1nonlinear}

\end{theorem}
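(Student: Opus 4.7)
The plan is to mimic the $\ell_0$ argument of Theorem~5.1, but using the support-splitting triangle inequality in place of counting. I read the condition as quantified over all sets $K\subseteq\{1,\dots,n\}$ with $|K|\le k$ (which is the same as quantifying over all admissible supports of $\e$).

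For the sufficiency direction I would fix an arbitrary error vector $\e$ with $\|\e\|_0\le k$, set $K$ to be its support, and let $\hat\x$ be any candidate with $\hat\x\neq\x$. Writing $\bfd=h(\x)-h(\hat\x)$ and noting that $\e$ lives entirely on $K$, I would split the objective by support and apply the triangle inequality on each piece:
\begin{equation*}
\|\y-h(\hat\x)\|_1 \;=\; \|\bfd+\e\|_1 \;=\; \|(\bfd+\e)_K\|_1+\|\bfd_{\overline K}\|_1 \;\ge\; \|\e\|_1-\|\bfd_K\|_1+\|\bfd_{\overline K}\|_1.
\end{equation*}
The hypothesis $\|\bfd_K\|_1<\|\bfd_{\overline K}\|_1$ then forces $\|\y-h(\hat\x)\|_1>\|\e\|_1=\|\y-h(\x)\|_1$, so $\x$ is the unique optimizer of \eqref{eqn:norm1}.

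For the necessity direction I would argue the contrapositive: assume there exists some $\x^*\neq\x$ and some $K$ with $|K|\le k$ for which $\|(h(\x)-h(\x^*))_K\|_1\ge\|(h(\x)-h(\x^*))_{\overline K}\|_1$. Set $\bfd=h(\x)-h(\x^*)$ and construct the error vector by $\e_i=-d_i$ for $i\in K$ and $\e_i=0$ otherwise; clearly $\|\e\|_0\le|K|\le k$, so this is an admissible bad-data pattern. Then
\begin{equation*}
\|\y-h(\x^*)\|_1 \;=\; \|\bfd+\e\|_1 \;=\; \|\bfd_{\overline K}\|_1 \;\le\; \|\bfd_K\|_1 \;=\; \|\e\|_1 \;=\; \|\y-h(\x)\|_1,
\end{equation*}
so $\x^*$ does at least as well as $\x$, and in particular $\x$ is not recovered uniquely.

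I do not expect any genuine obstacle here — this is essentially the classical null-space property argument, with $h(\x)-h(\x^*)$ replacing an element of the range of a linear map. The one point that needs a little care is making sure the quantifier in the hypothesis matches the one needed in the proof: sufficiency uses the condition for the particular $K=\mathrm{supp}(\e)$, while necessity produces a specific violating $K$ for which an adversarial $\e$ is explicitly built. If the intended reading of the theorem is ``unique recoverability,'' the strict inequality in the hypothesis and the corresponding non-strict violation used in the necessity construction line up cleanly, just as above.
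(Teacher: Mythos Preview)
Your proposal is correct and follows essentially the same argument as the paper: the sufficiency direction splits $\|\y-h(\hat\x)\|_1$ over $K$ and $\overline K$ and applies the triangle inequality on the $K$-part, while the necessity direction constructs the adversarial error $\e=(h(\x^*)-h(\x))_K$ (your $-\bfd_K$) to make $\x^*$ at least as good as $\x$. Your reading of the quantifier over $K$ (all supports of size at most $k$) is the intended one, and your remark about strict versus non-strict inequality matching the notion of unique recovery is exactly how the paper uses it.
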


\begin{proof}
We first prove  if any $\x^* \neq \x$, $\|(h(\x)-h(\x^*))_{K}\|_{1} < \|(h(\x)-h(\x^*))_{\overline{K}}\|_{1}$, where $K$ is the support of the error vector $\e$, we can correctly recover state $\x$ from (\ref{eqn:norm1}). Suppose that instead an solution to the optimization problem (\ref{eqn:norm0}) is an $\x^* \neq \x$.
Then
\begin{eqnarray*}
&&\|\y- h(\x^*)\|_{1}\\
&=&\|(h(\x)+\e)- h(\x^*)\|_{1}\\
&=& \|\e_{K}- (h(\x^*)-h(\x))_{K}\|_{1}+\|(h(\x^*)-h(\x))_{\overline{K}}\|_{1}\\
&\geq& \|\e_{K}\|_{1}- \|(h(\x^*)-h(\x))_{K}\|_{1}+\|(h(\x^*)-h(\x))_{\overline{K}}\|_{1}\\
&>& \|\e_{K}\|_{1}=\|\y-h(\x)\|_1.
\end{eqnarray*}
So $\x^* \neq \x$ can not be a solution to (\ref{eqn:norm1}), and this leads to a contradiction.

 Now suppose that there exists an $\x^* \neq \x$ such that $\|(h(\x)-h(\x^*))_{K}\|_{1} \geq \|(h(\x)-h(\x^*))_{\overline{K}}\|_{1}$, where $K$ is the support of the error vector $\e$.
Then we can pick $\e$ to be $(h(\x^*)-h(\x))_{K}$ over its support $K$ and to be $0$ over $\overline{K}$. Then
\begin{eqnarray*}
&&\|\y- h(\x^*)\|_{1}\\
&=& \|h(\x)- h(\x^*)+\e\|_{1}\\
&=& \|(h(\x^*)-h(\x))_{\overline{K}}\|_{1}\\
&\leq& \|(h(\x)-h(\x^*))_{K}\|_{1}=\|\e\|_1=\|y-h(\x)\|_1,
\end{eqnarray*}
which means that $\x$ can not be a solution to (\ref{eqn:norm1}) and is certainly not a unique solution to (\ref{eqn:norm1}).
\end{proof}

However, direct $\ell_0$ and $\ell_1$ minimization may be computationally costly because $\ell_0$ norm and nonlinear $h(\cdot)$ may lead to non-convex optimization problems. In the next subsection, we introduce our computationally efficient iterative sparse recovery algorithm in the general setting when the additive noise $\bfv$ is present.

\subsection{Iterative $\ell_1$-Minimization Algorithm}
\label{subsec:algorithmdescription}
Let $h(\cdot)$, $\x$, $\e$ and $\bfv$ be specified as above; and $\y=h(\x)+\e+\bfv$ with $\|\bfv\|_2\leq \epsilon$. Now let us consider the algorithm which recovers the state variables iteratively. Ideally, an estimate of the state variables, $\hat{\bfx}$, can be obtained by solving the following minimization problem,
\begin{eqnarray}\label{eqn:nonlinear}
\min_{\bfx,\z} &&\|\bfy-h(\bfx)-\z\|_1,\nonumber\\
{\text{subject to}}&&\|\z\|_2 \leq \epsilon.
\end{eqnarray}
where $\hat{\bfx}$ is the optimal solution $\bfx$. Even though the $\ell_1$ norm is a convex function,  the function $h(\cdot)$ may make the objective function non-convex.

Since $h$ is nonlinear, we linearize the equations and apply an
iterative procedure to obtain a solution. We start with an initial
state $\bfx^0$. In the $k$-th ($k \geq 1$) iteration, let $\Delta \bfy^k=\bfy-h(\bfx^{k-1})$, then we solve the following
convex optimization problem,
\begin{eqnarray}\label{eqn:linear}
\min_{\Delta \x,\z}  &&\|\Delta \y^{k}-H^{local}\Delta\x-\z\|_{1},\nonumber \\
{\text{subject to}}&&\|\z\|_2 \leq \epsilon,
\end{eqnarray}
where $H^{local}$ is the $n \times m$ Jacobian matrix of $h$ evaluated at the point
$\bfx^{k-1}$. Let $\Delta\bfx^k$ denote the optimal solution
$\Delta\bfx$ to (\ref{eqn:linear}), then the state estimation is
updated by
\begin{equation}
\bfx^{k}=\bfx^{k-1}+\Delta\bfx^k.
\end{equation}
We repeat the process until $\Delta \bfx^k$ approaches $0$ close enough or $k$ reaches a specified maximum value.

Note that when there is no additive noise, we can take $\epsilon=0$ in this iterative algorithm. When there is no additive noise, the algorithm is exactly the same as the state estimation algorithm from \cite{KV82}.

\subsection{Convergence Conditions for the Iterative Sparse Recovery Algorithm}
\label{subsec:algorithm}

In this subsection, we discuss the convergence of the proposed algorithm in Subsection \ref{subsec:algorithmdescription}. First, we give a necessary condition (Theorem \ref{thm:l1local}) for recovering the true state when there is no additive noise, and then give a sufficient condition (Corollary \ref{thm:l1iterative}) for the iterative algorithm to converge to the true state in the absence of additive noise. Secondly, we give the performance bounds (Theorem \ref{thm:l1iterativenoise}) for the iterative sparse error correction algorithm when there is additive noise.

\begin{theorem}[Necessary Recovery Condition]
Let $h(\cdot)$, $\x$, and $\e$ be specified as above; and $\y=h(\x)+\e$. The iterative algorithm converges to the true state $\x$
only if for the Jacobian matrix $H^{local}$ at the point of $\x$ and for any $\x^*\neq 0$, $\|{(H^{local}\x^*)}_{K}\|_1 > \|{(H^{local}\x^*)}_{\overline{K}}\|_1$, where $K$ is the support of
the error vector $\e$.
\label{thm:l1local}
\end{theorem}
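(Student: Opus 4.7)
The plan is to prove this necessary condition by contrapositive, mirroring the converse direction of Theorem \ref{thm:l1nonlinear} applied now to the linearized subproblem the iterative algorithm solves at the true state. First, I would specialize to the noiseless setting $\bfv=0$ (so $\epsilon=0$) and observe that when the iterate equals the true state, $\bfx^{k-1}=\bfx$, the subproblem (\ref{eqn:linear}) reduces to
\begin{equation*}
\min_{\Delta\bfx}\ \|\bfe-H^{local}\Delta\bfx\|_1,
\end{equation*}
since $\Delta\bfy^k=\bfy-h(\bfx)=\bfe$. For the iterative algorithm to converge to (and remain at) the true state, the zero perturbation $\Delta\bfx=0$ must be the \emph{unique} optimizer of this $\ell_1$ program; otherwise a nonzero $\Delta\bfx^k$ would push the iterate off $\bfx$. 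The theorem is thus reduced to a uniqueness statement for $\Delta\bfx=0$ in a linear $\ell_1$ residual-minimization problem whose error support is $K$.

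Next comes the contrapositive step. Assume the necessary condition on $H^{local}$ fails for some $\bfx^{*}\neq 0$, and construct an error vector $\bfe$ supported on $K$ that exhibits $\Delta\bfx=\bfx^{*}$ as a competing minimizer. Following the construction used in the second half of the proof of Theorem \ref{thm:l1nonlinear}, pick $\bfe_K$ with the same sign pattern as $(H^{local}\bfx^{*})_K$ and magnitudes large enough that componentwise cancellation on $K$ is exact. Splitting the $\ell_1$ objective along the disjoint supports $K$ and $\overline{K}$ then gives
\begin{equation*}
\|\bfe-H^{local}\bfx^{*}\|_1 \;=\; \|\bfe_K-(H^{local}\bfx^{*})_K\|_1 + \|(H^{local}\bfx^{*})_{\overline{K}}\|_1 \;=\; \|\bfe\|_1 - \|(H^{local}\bfx^{*})_K\|_1 + \|(H^{local}\bfx^{*})_{\overline{K}}\|_1,
\end{equation*}
and the assumed failure of the stated inequality on $\bfx^{*}$ drives this expression down to at most $\|\bfe\|_1$. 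This produces a second minimizer $\Delta\bfx=\bfx^{*}$, ruining uniqueness of $\Delta\bfx=0$, so the iterative algorithm can drift away from $\bfx$ and convergence to the true state fails.

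The main obstacle I expect is pinning down what ``converges to the true state $\bfx$'' should mean at the level of the linearized iteration: the cleanest interpretation, and the one consistent with Theorem \ref{thm:l1nonlinear}'s analogous argument, is that $\Delta\bfx=0$ must be the unique minimizer of the linearized subproblem at $\bfx$, so that no sideways drift is allowed. A secondary subtlety is handling strict versus non-strict inequalities in the contrapositive, resolved by choosing $|\bfe_K|$ large enough that the componentwise cancellation is exact and then mildly rescaling $\bfx^{*}$ to turn weak into strict improvement. Because only one iteration starting at $\bfx$ enters the argument, no smoothness or Lipschitz hypothesis on $h(\cdot)$ beyond differentiability at $\bfx$ is needed, and the hypothesis of the theorem uses exactly the Jacobian $H^{local}$ that appears in that one iteration---which is what makes the reduction to the linear-case argument of Theorem \ref{thm:l1nonlinear} clean.
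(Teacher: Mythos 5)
Your proposal is correct and takes essentially the same approach as the paper: the paper's proof is a one-line reduction --- apply the necessity half of Theorem \ref{thm:l1nonlinear} to the linearized map $g(\Delta\x)=h(\x)+H^{local}\Delta\x$ at the true state --- and your write-up simply makes that reduction explicit, with the same error construction supported on $K$ and the same conclusion that $\Delta\x=0$ then fails to be the unique minimizer of the linearized subproblem. One minor quibble: rescaling $\x^*$ cannot turn the weak inequality into a strict one (both $\|(H^{local}\x^*)_K\|_1$ and $\|(H^{local}\x^*)_{\overline{K}}\|_1$ scale by the same factor), but this is immaterial since a tie already destroys uniqueness, exactly as in the paper's argument; note also that the inequality as printed in the theorem statement has the roles of $K$ and $\overline{K}$ reversed, and both your proof and the paper's establish the intended condition $\|(H^{local}\x^*)_{K}\|_1 < \|(H^{local}\x^*)_{\overline{K}}\|_1$.
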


\begin{proof}
The proof follows from the proof for Theorem \ref{thm:l1nonlinear}, with the linear function $g(\Delta \x)=h(\x)+H^{local}\Delta \x$, where $H^{local}$ is the Jacobian matrix at the true state $\x$.
\end{proof}

Theorem \ref{thm:l1local} shows that for nonlinear measurements, the local Jacobian matrix needs to satisfy the same condition as the matrix for linear measurements. This assumes that the iterative algorithm starts with the correct initial state. However, the iterative algorithm generally does not start at the true state $\x$. In the following theorem, we give a sufficient condition for the algorithm to have an upper bounded estimation error when additive noises are present, even though the starting point is not precisely the same one. As a corollary of this theorem, the proposed algorithm converges to the true state when there is no additive noise.

\begin{theorem}[Guarantee with Additive noise]
Let $h(\cdot)$, $\x$, $\e$, and $n$ be specified as above; and $\y=h(\x)+\e+\bfv$ with $\|\bfv\|_2\leq \epsilon$. Suppose that at every point $\x$, the local Jacobian matrix $H^{local}$ is full rank and satisfies that for every $\z$ in the range of $H^{local}$, $C\|\z_{K}\|_1 \leq \|\z_{\overline{K}}\|_1$, where $K$ is the support of the error vector $\e$ and $C$ a constant larger than $1$. Moreover, for a fixed constant $\beta<1$, we assume that
\begin{equation}\label{eqn:optexpwithnoise}
\frac{2(C+1)}{C-1} \frac{\sigma_{max}^{1}(H^{true}-H^{local})}{\sigma_{min}^{1}(H^{local})}\leq\beta
\end{equation}
holds for any two states $\x_1$ and $\x_2$, where $H^{local}$ is the local Jacobian matrix at the point $\x_1$, $H^{true}$ is a matrix such that $h(\x_2)-h(\x_1)=H^{true} (\x_2-\x_1)$, $\sigma_{max}^{1}(A)$ is the induced $\ell_1$ matrix norm for $A$, and $\sigma_{min}^{1}(A)$ for a matrix $A$ is defined as $ \sigma_{min}^{1}(A)= \min\{\|A\z\|_1 : \mbox{ with }\|\z\|_1= 1\}$.

Then for any true state $\x$, the estimation $\x^{k+1}=\x^{k}+\Delta x^{k+1}$, where $\Delta x^{k+1}$ is the solution to the $(k+1)$-th iteration optimization
\begin{eqnarray}
\min_{\Delta \x^{k+1},\z}  &&\|\Delta \y^{k+1}-H^{local}\Delta\x^{k+1}-\z\|_{1},\nonumber \\
{\text{subject to}}&&\|\z\|_2 \leq \epsilon
\label{eq:optheoremnoise}
\end{eqnarray}
satisfies
\begin{eqnarray}\label{eqn:optexp7noise}
\|\x&-&\x^{k+1}\|_1 \leq \frac{2(C+1)}{(C-1){\sigma_{min}^{1}(H^{local})}}\times 2\sqrt{n} \epsilon\nonumber\\
&+&\frac{2(C+1)}{C-1} \frac{\sigma_{max}^{1}(H^{true}-H^{local})}{\sigma_{min}^{1}(H^{local})}\|\x-\x^{k}\|_1.\nonumber
\end{eqnarray}
As $k\rightarrow \infty$, with $\frac{2(C+1)}{C-1} \frac{\sigma_{max}^{1}(H^{true}-H^{local})}{\sigma_{min}^{1}(H^{local})}\leq\beta<1$,
\begin{eqnarray}\label{eqn:optexp8noise}
\|\x-\x^{k+1}\|_1 \leq \frac{2(C+1)}{(1-\beta)(C-1){\sigma_{min}^{1}(H^{local})}}\times 2\sqrt{n} \epsilon.\nonumber
\end{eqnarray}
\label{thm:l1iterativenoise}
\end{theorem}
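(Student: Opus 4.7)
The plan is to mimic closely the proof of Theorem \ref{thm:bound} for the linear setting, treating the nonlinearity gap $(H^{true}-H^{local})(\x-\x^{k})$ as an additional perturbation to be absorbed into the error analysis. The starting point is the identity
\begin{equation*}
\Delta \y^{k+1}=\y-h(\x^{k})=[h(\x)-h(\x^{k})]+\e+\bfv=H^{true}(\x-\x^{k})+\e+\bfv,
\end{equation*}
which shows that $(\Delta\x,\z)=(\x-\x^{k},\bfv)$ is feasible for (\ref{eq:optheoremnoise}) with objective value at most $\|(H^{true}-H^{local})(\x-\x^{k})\|_{1}+\|\e\|_{1}$. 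Denoting the optimizer by $(\Delta\x^{k+1},\hat\z)$ and setting $\w:=H^{local}(\x-\x^{k+1})=H^{local}((\x-\x^{k})-\Delta\x^{k+1})$, a direct rewriting yields the residual identity
\begin{equation*}
\Delta\y^{k+1}-H^{local}\Delta\x^{k+1}-\hat\z=\w+(H^{true}-H^{local})(\x-\x^{k})+\e+\bfv-\hat\z.
\end{equation*}

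Next I would run two triangle inequalities in opposite directions, exactly as in the linear case. Combining the feasibility upper bound on the optimum with the reverse triangle inequality applied to the identity above gives
\begin{equation*}
\|\w+\e\|_{1}-\|(H^{true}-H^{local})(\x-\x^{k})\|_{1}-\|\bfv\|_{1}-\|\hat\z\|_{1}\le \|(H^{true}-H^{local})(\x-\x^{k})\|_{1}+\|\e\|_{1}.
\end{equation*}
Since $\e$ is supported on $K$, the lower bound $\|\w+\e\|_{1}\ge \|\e\|_{1}-\|\w_{K}\|_{1}+\|\w_{\overline K}\|_{1}$ combined with $\|\bfv\|_{1}\le\sqrt{n}\epsilon$ and $\|\hat\z\|_{1}\le\sqrt{n}\epsilon$ (both via Cauchy--Schwarz from the constraint $\|\cdot\|_{2}\le\epsilon$) cancels the $\|\e\|_{1}$ terms and produces
\begin{equation*}
-\|\w_{K}\|_{1}+\|\w_{\overline K}\|_{1}\le 2\|(H^{true}-H^{local})(\x-\x^{k})\|_{1}+2\sqrt{n}\epsilon.
\end{equation*}
Because $\w$ lies in the range of $H^{local}$, the hypothesis $C\|\w_{K}\|_{1}\le\|\w_{\overline K}\|_{1}$ implies $\|\w_{\overline K}\|_{1}-\|\w_{K}\|_{1}\ge\tfrac{C-1}{C+1}\|\w\|_{1}$. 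Finally, applying the definitions $\|\w\|_{1}\ge\sigma_{min}^{1}(H^{local})\|\x-\x^{k+1}\|_{1}$ and $\|(H^{true}-H^{local})(\x-\x^{k})\|_{1}\le\sigma_{max}^{1}(H^{true}-H^{local})\|\x-\x^{k}\|_{1}$ delivers the single-step recurrence claimed in the theorem.

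The contraction hypothesis then puts this recurrence in the standard form $a_{k+1}\le\beta a_{k}+c$ with $\beta<1$ and $c$ a constant multiple of $\sqrt{n}\epsilon$; iterating gives $a_{k+1}\le\beta^{k+1}a_{0}+c(1-\beta^{k+1})/(1-\beta)$, so sending $k\to\infty$ produces the steady-state bound $c/(1-\beta)$, matching the limiting inequality. The main obstacle I anticipate is the bookkeeping tied to the iterate-dependent matrices: at each step $H^{local}$ is the Jacobian at the current point $\x^{k}$ while $H^{true}$ is the mean-value-type matrix relating $h(\x)-h(\x^{k})$ to $\x-\x^{k}$, and the hypothesis must hold uniformly across all pairs of states encountered during the iteration. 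Beyond that, the argument is simply the Theorem \ref{thm:bound} chain of inequalities carried out in the induced $\ell_{1}$ operator norm rather than the $\ell_{2}$ singular-value norm, with the linearization mismatch playing the role of a second noise term alongside $\bfv$.
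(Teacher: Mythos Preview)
Your proposal is correct and follows essentially the same route as the paper: both proofs write $\Delta\y^{k+1}=H^{true}(\x-\x^{k})+\e+\bfv$, plug in the feasible pair $(\x-\x^{k},\bfv)$ to upper-bound the optimum, peel off $\e$, $\bfv$, $\hat\z$, and the linearization gap via triangle inequalities, apply the balancedness property $C\|\cdot_K\|_1\le\|\cdot_{\overline K}\|_1$ to the range of $H^{local}$, and finish with the $\sigma_{min}^{1}/\sigma_{max}^{1}$ bounds to obtain the contraction recurrence. Your bookkeeping is in fact slightly tighter than the paper's (you would get $\sqrt{n}\epsilon$ where the paper loosens to $2\sqrt{n}\epsilon$), but the structure and key ideas are identical.
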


\textbf{Remarks}: When the function is linear and therefore $H^{true}=H^{local}$, the condition (\ref{eqn:optexpwithnoise}) will always be satisfied with $\beta=0$. So nonlinearity is captured by the the term $\sigma_{max}^{1}(H^{true}-H^{local})$.

\begin{proof}
At the $k$-th iteration of the iterative state estimation algorithm
\begin{equation}\label{eqn:expansionnoise}
\y=H^{true}\Delta\bfx^*+h(\x^{k})+\e+\bfv,
\end{equation}
where $H^{true}$ is an $n \times m$ matrix, $\x^{k}$ is the state estimate at the $k$-th step, and $\Delta\bfx^*=\x-\x^k$, namely the estimation error at the $k$-th step.

Since at the $(k+1)$-th step, we are solving the following optimization problem
\begin{eqnarray}\label{eqn:optexp1noise}
\min_{\Delta \x,\z}  &&\|\Delta \y^{k+1}-H^{local}\Delta\x-\z\|_{1},\nonumber \\
{\text{subject to}}&&\|\z\|_2 \leq \epsilon.
\end{eqnarray}

Plugging (\ref{eqn:expansionnoise}) into (\ref{eqn:optexp1noise}),  we are really solving
\begin{eqnarray}\label{eqn:optexp2noise}
\min_{\Delta \x,\z}  &&\|H^{true}\Delta\x^{*}+\e+\bfv-H^{local}\Delta \x-\z\|_{1},\nonumber \\
{\text{subject to}}&&\|\z\|_2 \leq \epsilon.
\end{eqnarray}

Denoting $(H^{true}-H^{local})\Delta\x^{*}$ as $\w$, which is the measurement gap generated by using the local Jacobian matrix $H^{local}$ instead of $H^{true}$, then (\ref{eqn:optexp2noise}) is equivalent to
%
%
%
\begin{eqnarray}\label{eqn:optexp3noise}
&\min_{\Delta \x,\z}& \|H^{local}(\Delta\x^{*}-\Delta\x)+ \w+\e+\bfv-\z\|_{1},\nonumber \\
&{\text{subject to}}& \|\z\|_2 \leq \epsilon.
\end{eqnarray}

Suppose that the solution to (\ref{eq:optheoremnoise}) is $\Delta \x=\Delta \x^*-error$. We are minimizing the objective $\ell_1$ norm, and $(\Delta \x^{*},\bfv)$ is a feasible solution with an objective function value $\|\w+\e\|_1$, so we have
\begin{equation}\label{eqn:optexp4noise}
\|H^{local}\times error +\w+\e+\bfv-\z\|_1 \leq \|\w+\e\|_1.
\end{equation}

By the triangular inequality and the property of $H^{local}$, using the same line of reasoning as in the proof of Theorem \ref{thm:bound}, we have
\begin{eqnarray}\label{eqn:optexp5noise}
\|\e\|_1&+&\frac{C-1}{C+1}\|H^{local}\times error\|_1-\|\w\|_1-\|\bfv\|_1-\|\z\|_1 \nonumber\\
 &\leq& \|\e\|_1+\|\w\|_1.
\end{eqnarray}

So
\begin{equation}\label{eqn:optexp6noise}
\|H^{local}\times error\|_1 \leq \frac{2(C+1)}{C-1} (\|\w\|_1+\|\bfv\|_1+\|\z\|_1).
\end{equation}

Since $\|\bfv\|_1$ and $\|\z\|_1$ are both no larger than $2\sqrt{n}\epsilon$, $error=\Delta \x^*-\Delta \x$, $(\x-\x^k)=\Delta \x^{*}$, and $\x-\x^{k+1}=(\x-\x^k)-(\x^{k+1}-\x^k)=\Delta \x^*-\Delta\x$, we have
\begin{eqnarray}\label{eqn:optexp7noiseinproof}
\|\x&-&\x^{k+1}\|_1 \leq \frac{2(C+1)}{(C-1){\sigma_{min}^{1}(H^{local})}}\times 2\sqrt{n} \epsilon\nonumber\\
&+&\frac{2(C+1)}{C-1} \frac{\sigma_{max}^{1}(H^{true}-H^{local})}{\sigma_{min}^{1}(H^{local})}\|\x-\x^{k}\|_1,\nonumber
\end{eqnarray}
where $\sigma_{max}^{1}(H^{true}-H^{local})$ and $\sigma_{min}^{1}(H^{local})$ are respectively the matrix quantities defined in the statement of the theorem.

So as long as $\frac{2(C+1)}{C-1} \frac{\sigma_{max}^{1}(H^{true}-H^{local})}{\sigma_{min}^{1}(H^{local})}\leq\beta$,
for some fixed constant $\beta<1$, the error upper bound converges to $\frac{2(C+1)}{(1-\beta)(C-1){\sigma_{min}^{1}(H^{local})}}\times 2\sqrt{n} \epsilon$.
\end{proof}

 When there is no additive noise, as a corollary of Theorem \ref{thm:l1iterativenoise}, we know the algorithm converges to the true state.
\begin{corollary}[Correct Recovery without Additive noise]
Let $\y$, $h(\cdot)$, $\x$, $H$, and $\e$  be specified as above; and $\y=h(\x)+\e$. Suppose that at every point $\x$, the local Jacobian matrix $H^{local}$ is full rank and satisfies that for every $\z$ in the range of $H^{local}$, $C\|\z_{K}\|_1 \leq \|\z_{\overline{K}}\|_1$, where $K$ is the support of the error vector $\e$. Moreover, for a fixed constant $\beta<1$, we assume that \begin{equation}\label{eqn:optexp9}
\frac{2(C+1)}{C-1} \frac{\sigma_{max}^{1}(H^{true}-H^{local})}{\sigma_{min}^{1}(H^{local})}\leq\beta,
\end{equation}
holds true for any two states $\x_1$ and $\x_2$, where $H^{local}$ is the local Jacobian matrix at the point $\x_1$, $H^{true}$ is a matrix such that $h(\x_2)-h(\x_1)=H^{true} (\x_2-\x_1)$, $\sigma_{max}^{1}(A)$ is the induced $\ell_1$ matrix norm for $A$, and $\sigma_{min}^{1}(A)$ for a matrix $A$ is defined as $ \sigma_{min}^{1}(A)= \min\{\|A\z\|_1 : \mbox{ with }\|\z\|_1= 1\}$.

Then any state $\x$ can be recovered correctly from the observation $\y$  from the iterative algorithm in Subsection \ref{subsec:algorithmdescription}, regardless of the initial starting state of the algorithm.
\label{thm:l1iterative}
\end{corollary}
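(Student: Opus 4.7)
The plan is to obtain this corollary as an immediate specialization of Theorem \ref{thm:l1iterativenoise} to the noise-free case $\bfv = \mathbf{0}$, i.e.\ $\epsilon = 0$. First, I would observe that the hypotheses of Corollary \ref{thm:l1iterative} are exactly the hypotheses of Theorem \ref{thm:l1iterativenoise} restricted to $\epsilon = 0$: the rank condition on $H^{local}$, the sparse-null-space-type inequality $C\|\z_K\|_1 \leq \|\z_{\overline K}\|_1$ on the range of $H^{local}$, and the contraction condition $\frac{2(C+1)}{C-1}\cdot \frac{\sigma_{max}^{1}(H^{true}-H^{local})}{\sigma_{min}^{1}(H^{local})} \leq \beta < 1$ are identical. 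So I can invoke Theorem \ref{thm:l1iterativenoise} directly on the observation $\y = h(\x) + \e$, treating it as the $\epsilon = 0$ special case of $\y = h(\x)+\e+\bfv$ with $\|\bfv\|_2 \leq 0$.

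Next, I would trace what the per-iteration bound from Theorem \ref{thm:l1iterativenoise} gives when $\epsilon = 0$. The additive term $\frac{2(C+1)}{(C-1)\sigma_{min}^{1}(H^{local})} \cdot 2\sqrt{n}\,\epsilon$ vanishes, leaving the pure contraction inequality
\begin{equation*}
\|\x - \x^{k+1}\|_1 \;\leq\; \frac{2(C+1)}{C-1}\cdot \frac{\sigma_{max}^{1}(H^{true}-H^{local})}{\sigma_{min}^{1}(H^{local})}\,\|\x - \x^{k}\|_1 \;\leq\; \beta\,\|\x - \x^{k}\|_1,
\end{equation*}
valid at every iteration $k \geq 0$ regardless of the starting point $\x^0$, since the assumed bound on $\sigma_{max}^{1}(H^{true}-H^{local})/\sigma_{min}^{1}(H^{local})$ holds uniformly over all pairs of states.

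Then I would iterate this contraction: by a straightforward induction on $k$, the inequality yields $\|\x - \x^{k}\|_1 \leq \beta^{k}\,\|\x - \x^{0}\|_1$ for every $k \geq 0$. Since $0 \leq \beta < 1$ is a fixed constant, $\beta^k \to 0$ as $k \to \infty$, so $\|\x - \x^{k}\|_1 \to 0$ and hence $\x^{k} \to \x$ in $\ell_1$ norm. Because the bound holds for arbitrary $\x^0$, convergence is independent of the initialization, which is exactly what the corollary asserts.

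I do not anticipate any real obstacle beyond careful bookkeeping, since all the heavy lifting, namely the per-step error bound derived via the triangle inequality, the range condition on $H^{local}$, and the Jacobian linearization argument, has already been carried out in the proof of Theorem \ref{thm:l1iterativenoise}. The only thing to be slightly careful about is noting that the contraction hypothesis \eqref{eqn:optexp9} is required to hold uniformly in the pair $(\x_1,\x_2)$, so that we may re-apply it at every iteration with whatever current $\x^k$ arises from the algorithm; this uniformity is precisely what is assumed in the corollary statement.
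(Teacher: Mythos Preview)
Your proposal is correct and matches the paper's approach exactly: the paper presents this corollary simply as the $\epsilon = 0$ specialization of Theorem~\ref{thm:l1iterativenoise}, so that the additive noise term vanishes and the per-iteration bound becomes the contraction $\|\x - \x^{k+1}\|_1 \leq \beta\,\|\x - \x^{k}\|_1$, whence $\x^k \to \x$. Your write-up is in fact more explicit than the paper, which does not spell out the induction step or the limit argument.
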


\subsection{Verifying the Conditions for Nonlinear Sparse Error Correction}
\label{subsec:verification}
To verify our conditions for sparse error correction from nonlinear measurements, we need to verify:
\begin{itemize}
\item at every iteration of the algorithm, for every $\z$ in the range of the local Jacobian matrix $H^{local}$, $C\|\z_{K}\|_1 \leq \|\z_{\overline{K}}\|_1$, where $K$ is the support of the error vector $\e$ and $C>1$ is a constant.
\item for some constant $\beta<1$,
\begin{equation}\label{eq:verification}
\frac{2(C+1)}{C-1} \frac{\sigma_{max}^{1}(H^{true}-H^{local})}{\sigma_{min}^{1}(H^{local})}\leq\beta
\end{equation}
\end{itemize}

To simplify our analysis, we only focus on verifying these conditions under which the algorithm converges to the true state when there is no observation noise (namely Corollary \ref{thm:l1iterative}), even though similar verifications also apply to the noisy setting (Theorem \ref{thm:l1iterativenoise}).

\subsubsection{Verifying the balancedness condition}

In nonlinear setting, the first condition, namely the balancedness condition \cite{XuHassibi}, needs to hold \emph{simultaneously} for many (or even an infinite number of) local Jacobian matrices $H^{local}$, while in the linear setting, the condition only needs to hold for a single linear subspace. The biggest challenge is then to verify that the balancedness condition holds \emph{simultaneously} for a set of linear subspaces. In fact,  verifying the balancedness condition for a single linear space is already very difficult \cite{Aspremont,Juditsky, Tangverify}. In this subsection, we show how to convert checking this balancedness condition into a semidefinite optimization problem, through which we can explicitly verify the conditions for nonlinear sparse error corrections.

One can verify the balancedness condition for each individual local Jacobian matrix $H^{local}$ in an instance of algorithm execution, however, the results are not general enough: our state estimation algorithm may pass through different trajectories of $H^{local}$'s, with different algorithm inputs and initializations. Instead, we propose to check the balancedness condition simultaneously for all $H^{local}$'s in a neighborhood of $H_0$, where $H_0$ is the local Jacobian matrix at the true system state. Once the algorithm gets into this neighborhood in which the balancedness condition holds for every local Jacobian matrix, the algorithm is guaranteed to converge to the true system state.

More specifically, we consider a neighborhood of the true system state, where each $H^{local}=H_0+\Delta H$. We assume that for each state in this neighborhood, each row of $\Delta H$ is bounded by $\epsilon>0$ in $\ell_2$ norm. The question is whether, for every $H^{local}$, every vector $\z$ in the range of $H^{local}$ satisfies $C\|\z_{K}\|_1 \leq \|\z_{\overline{K}}\|_1$,  for all subsets $K$ with cardinality $|K|\leq k$ ($k$ is the sparsity of bad data). This requirement is satisfied, if the optimal objective value of the following optimization problem is no more than $\alpha=\frac{1}{1+C}$:
\begin{eqnarray}
\max_{K} \max_{H} \max_{\x}  &&\|(H\x)_{K}\|_{1},\nonumber \\
{\text{subject to}}&&\|(H\x)_{\overline{K}}\|_1\leq 1 .
\label{eq:general bound}
\end{eqnarray}

However, this is a difficult problem to solve, because there are at least $\binom{n}{k}$ subsets $K$, and the objective is not a concave function. We instead solve a series of $n$ optimization problems to find an upper bound on the optimal objective value of (\ref{eq:general bound}). For each integer $1 \leq i\leq n$, by taking the set $K=\{i\}$, we solve
\begin{eqnarray}
\max_{H} \max_{\x}  && \|(H\x)_{\{i\}}\|_{1},\nonumber \\
{\text{subject to}}&&\|(H\x)_{\overline{\{i\}}}\|_1\leq 1.
\label{eq:tractable_bound_fork_fundamental}
\end{eqnarray}

Suppose the optimal objective value of (\ref{eq:tractable_bound_fork_fundamental}) is given by $\alpha_i$ for $1\leq i \leq n$. Then $\frac{\alpha_{i}}{1+\alpha_{i}}$ is the largest fraction $\|(H\x)_{\{i\}}\|_1$ can occupy out of $\|H\x\|_1$. Then $\max\limits_{K,|K|=k}\sum\limits_{i \in K} \frac{\alpha_{i}}{1+\alpha_{i}}$ is the largest fraction $\|(H\x)_{K}\|_{1}$ can occupy out of $\|H\x\|_1$ for any subset $K$ with cardinality $|K|\leq k$.  Let us then take a constant $C$ satisfying  $\frac{1}{C+1}=\max\limits_{K,|K|=k}\sum\limits_{i \in K} \frac{\alpha_{i}}{1+\alpha_{i}}$. Therefore $C\|\z_{K}\|_1 \leq \|\z_{\overline{K}}\|_1$ always holds true, for all support subsets $K$ with $|K|\leq k$, and all $\z$'s in the ranges of all $H^{local}=H_0+\Delta H$ with each row of $\Delta H$ bounded by $\epsilon$ in $\ell_2$ norm.

So now we only need to give an upper bound on the optimal objective value $\alpha_i$ of (\ref{eq:tractable_bound_fork_fundamental}). By first optimizing over $H$, $\alpha_i$ is upper bounded by the optimal objective value of the following optimization problem:
%
\begin{eqnarray}
\max_{\x}  &&\|(H_0\x)_{\{i\}}\|_{1}+ \epsilon \|\x\|_2 ,\nonumber \\
{\text{subject to}}&&\|(H_0\x)_{\overline{\{i\} }}\|_1-(n-1) \epsilon \|\x\|_2 \leq 1.
\label{eq:tractable_bound_fork}
\end{eqnarray}

Since \begin{equation*}\|(H_0\x)_{\overline{\{i\} }}\|_1 \geq \|(H_0\x)_{\overline{\{i\} }}\|_2 \geq \sigma_{min}((H_0)_{\overline{\{i\} }}) \|\x\|_2, \end{equation*} (\ref{eq:tractable_bound_fork}) is equivalent to
\begin{eqnarray}
\max_{\x}  &&\|(H_0\x)_{\{i\}}\|_{1}+ \epsilon \|\x\|_2 ,\nonumber \\
{\text{subject to}}&&\|(H_0\x)_{\overline{\{i\} }}\|_1-(n-1) \epsilon \|\x\|_2 \leq 1, \nonumber \\
&&\|(H_0)_{\overline{\{i\} }}\x\|_2-(n-1)\epsilon  \|\x\|_2 \leq 1.
\label{eq:relaxedl2norm}
\end{eqnarray}

 From $\|(H_0)_{\overline{\{i\} }}\x\|_2-(n-1)\epsilon  \|\x\|_2 \leq 1$, we know $\|\x\|_2$ in the feasible set of $(\ref{eq:relaxedl2norm})$ is upper bounded by $t'=\frac{1}{\sigma_{\min}-(n-1)\epsilon}$, where ${\sigma_{\min}}$ is the smallest singular value of the matrix  $(H_0)_{\overline{\{i\} }}$.  This enables us to further relax (\ref{eq:relaxedl2norm}) to
\begin{eqnarray}
\max_{\x}  &&\|(H_0\x)_{\{i\}}\|_{1}+ \epsilon t' ,\nonumber \\
{\text{subject to}}&&\|(H_0\x)_{\overline{\{i\} }}\|_1-(n-1) \epsilon t' \leq 1, \nonumber \\
&&(\sigma_{min}-(n-1)\epsilon)\|\x\|_2 \leq 1,
\label{eq:relaxedsemi}
\end{eqnarray}
which can be solved by semidefinite programming algorithms.
~\\

\subsubsection{Verifying the second condition in nonlinear sparse error correction}
Now we are in a position to verify the second condition for successful nonlinear sparse recovery:
\begin{equation}\label{eqn:conditiontoverify}
\frac{2(C+1)}{C-1} \frac{\sigma_{max}^{1}(H^{true}-H^{local})}{\sigma_{min}^{1}(H^{local})}\leq\beta
\end{equation}
holds for a constant $\beta< 1$ in this process.

Suppose that the Jacobian matrix at the true state $\x_0$ is $H_0$. Define a set $M_{\epsilon}$ as the set of system states $\x$ at which each row of $H^{local}-H_0$ is no larger than $\epsilon$ in $\ell_2$ norm, where  $H^{local}$ is the local Jacobian matrix.

By picking a small enough constant $\tau>0$,  we consider a diamond neighborhood $N_{\tau}=\{\x|~~\|\x-\x_0\|_1 \leq \tau \}$ of the true state $\x_0$, such that $N_{\tau} \subseteq M_{\epsilon}$. By Mean Value Theorem for multiple variables, for each $\x \in N_{\tau}$, we can find an $n \times m $ matrix $H^{true}$ such that $h(\x)-h(\x_0)=H^{true}(\x-\x_0)$, and moreover, each row of $H^{true}-H_0$ is also upper bounded by $\epsilon$ in $\ell_2$ norm.

For this neighborhood $N_{\tau}$, using the proposed semidefinite programming (\ref{eq:relaxedsemi}), for a  fixed $k$ (the sparsity), we can find a $C$ such that for every local Jacobian $H^{local}$ in that neighborhood, every vector $\z$ in the subspace generated by $H^{local}$ satisfies the balancedness property with the parameter $C$.

Meanwhile, $\sigma_{max}^{1}(H^{true}-H^{local}) \leq n\times 2\epsilon$ in that neighborhood since each row of $H^{true}-H_0$ and $H^{local}-H_0$ is no larger than $\epsilon$ in $\ell_2$ norm.

Similarly, for every $H^{local}$ in that neighborhood $N_{\tau}$, by the definition of singular values and inequalities between $\ell_1$ and $\ell_2$ norms,  $\sigma_{min}^{1}(H^{local}) \geq \sigma_{min}^{1}(H_0)-n\epsilon \geq \frac{1}{\sqrt{n}} \sigma_{min}(H_0)-n\epsilon$, where $\sigma_{min}(H_0)$ is the smallest singular value of $H_0$.

So as long as
\begin{equation}\label{eqn:conditiontoverifyfinal}
\frac{2(C+1)}{C-1} \frac{  2 n \epsilon  }{\frac{1}{\sqrt{n}}\sigma_{min}(H_0)-n\epsilon}\leq\beta<1,
\end{equation}
once our state estimation algorithm gets into $N_{\tau}=\{\x| ~~\|\x-\x_0\|_1 \leq \tau \} \subseteq M_{\epsilon}$, the algorithm will always stay in the region $N_{\tau}$ (because $\|\x^k-\x_0\|_1$ decreases after each iteration, see the proof of Theorem \ref{thm:l1iterative} with $\epsilon=0$), and the iterative state estimation algorithm will converge to the true state $\x_0$.


In summary, we can explicitly compute a parameter $\epsilon$ such that inside a diamond neighborhood $N_{\tau}=\{\x| \|\x-\x_0\|_1 \leq \tau \} \subseteq M_{\epsilon}$, the condition $\frac{2(C+1)}{C-1} \frac{\sigma_{max}^{1}(H^{true}-H^{local})}{\sigma_{min}^{1}(H^{local})}\leq\beta <1$ is always satisfied and the decoding algorithm always converges to the true state $\x$ once it gets into the region  $N_{\tau}$. The size of the region $N_{\tau}$ depends on specific functions. For example, if $h(\cdot)$ is a linear function and $C>1$, $N_{\tau}$ can be taken as $R^{m}$. This fits with known results in \cite{CandesErrorCorrection,CT1} for linear measurements where the local Jacobian is the same everywhere.

\section{Numerical Results}
\label{sec:numerical}

In our simulation, we apply (\ref{eq:bus}) to estimate an
unknown vector from Gaussian linear measurements with both  sparse
errors and noise, and also apply the iterative method to recover
state information from nonlinear measurements with bad data and
noise in a power system.

\textbf{Linear System:} We first consider recovering a signal vector
from linear Gaussian measurements. Let $m=60$ and $n=150$. We
generate the measurement matrix $H^{n \times m}$ with i.i.d.
${N}(0,1)$ entries. We also generate a vector $\bfx\in {R}^m$ with
i.i.d  entries uniformly chosen from interval $[-1,1]$. 


 We add to each
measurement of $H\bfx$ with a Gaussian noise independently drawn
from ${N}(0, \sigma^2)$. Let $\rho$ denote the percentage of erroneous
measurements. Given $\rho$, we randomly choose $\rho n$
measurements, and each such measurement is added with a Gaussian
error independently drawn from ${N}(0, 4^2)$. We apply (\ref{eq:bus})
to estimate $\bfx$. Since the noise vector $\bfz \in {R}^m$ has i.i.d. ${N}(0, \sigma^2)$ entries, then $\|\bfz\|_2/\sigma$ follows the chi distribution with dimension $m$. Let $F_m(x)$ denote the  cumulative distribution function (CDF) of the chi distribution of dimension $m$, and let $F_m^{-1}(y), y\in [0,1]$ denote its inverse distribution function. We choose $\epsilon$ to be $C\sigma$ where
\begin{equation}\label{eqn:C}
C:=F_m^{-1}(0.98).
\end{equation}
Thus, $\|\bfz\|_2 \leq \epsilon$ holds with probability 0.98 for randomly generated $\bfz$. Let $\bfx^*$ denote the estimation of $\bfx$, and the relative
estimation error is represented by $\|\bfx^*-\bfx||_2/\|\bfx\|_2$. 

We  fix the noise level and consider the estimation performance when
the number of erroneous measurements changes. Fig. \ref{fig:gaussianconstraint} shows how the estimation error
changes as $\rho$ increases, and each result is averaged over one hundred runs. We choose different $\sigma$ between 0 and 2. When $\sigma=0$, the measurements have errors but no noise, and (\ref{eq:bus}) is reduced to conventional $\ell_1$-minimization problem. One can see from Fig. \ref{fig:gaussianconstraint} that when there is no noise, $\bfx$ can be correctly recovered from (\ref{eq:bus}) even when twenty  percent of measurements contain errors.

\begin{figure}[t]
\centering
\includegraphics[scale=0.4]{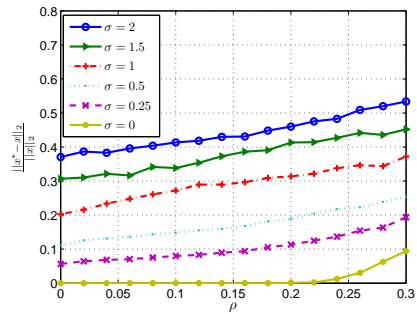}
\caption{Estimation error versus $\rho$ for Gaussian measurements with fixed noise level}\label{fig:gaussianconstraint} 
\end{figure}

We next compare the recovery performance of (\ref{eq:bus})  with that of $\ell_1$-minimization. We fix the number of erroneous measurements to be two, i.e., the percentage of erroneous
measurements is $\rho=0.0133$, and increase the noise level $\sigma$. One can see from  Fig. \ref{fig:comparel1} that (\ref{eq:bus}) has a smaller estimation error than that of $\ell_1$-minimization when the noise level is not too small. This is not that surprising since (\ref{eq:bus}) takes into account the measurement noise while $\ell_1$-minimization does not.
\begin{figure}[t]
\centering
\includegraphics[scale=0.4]{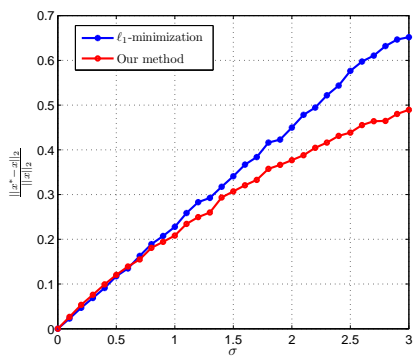}
\caption{Comparison of the estimation error with $\ell_1$-minimization}\label{fig:comparel1}
\end{figure}

\textbf{Power System:} We also consider state estimation in power networks. 
Monitoring the system characteristics  is a fundamental prerequisite for the efficient and reliable operation of the power networks. Based on the meter measurements, state estimation provides pertinent information, e.g., the voltage magnitudes and
the voltage angles at each bus, about the operation  condition of a power grid \cite{AE04,M00,SH74}. Modern devices like a phasor measurement unit (PMU) \cite{JYLLM00} can measure the system states directly, but their installation in the current power system is very limited. 
The current meter measurements are mainly bus voltage magnitudes, real and
reactive power injections at each bus, and the real and reactive
power flows on the lines. The meter measurements are subject to observation noise, and may contain large errors \cite{BC,HSKF75,MG}. Moreover, with the modern information technology introduced in a smart grid, there may exist data attacks from intruders or malicious insiders \cite{TL10a,TL10b,LNR11}. Incorrect output of state estimation will   mislead the system operator, and can possibly result in catastrophic consequences.  Thus, it is important to obtain accurate estimates of the system states from measurements that are noisy and erroneous. If the percentage of the erroneous measurements is small, which is a valid assumption given the massive scale of the power networks, the state estimation is indeed a special case of the sparse error correction problem from nonlinear measurements that we discussed in Section \ref{sec:nonlinear}. 

The relationship between the measurements and the state variables
for a $k'$-bus system can be stated as follows \cite{KV82}: {\small
\begin{eqnarray}\label{eqn:sys1}
P_i&=&\sum_{j=1}^{k'} E_i E_j Y_{ij} \cos(\theta_{ij} +\delta_i
-\delta_j),\\
Q_i&=&\sum_{j=1}^{k'} E_i E_j Y_{ij} \sin(\theta_{ij} +\delta_i
-\delta_j),\\
P_{ij}&=&E_i E_j Y_{ij} \cos(\theta_{ij} +\delta_i -\delta_j) \nonumber\\
&&-E_i^2 Y_{ij}\cos \theta_{ij} +E_i^2 Y_{si} \cos\theta_{si} \quad
i \neq j,\\
Q_{ij}&=&E_i E_j Y_{ij} \sin(\theta_{ij} +\delta_i -\delta_j)
\nonumber \\&&-E_i^2 Y_{ij}\sin \theta_{ij} +E_i^2 Y_{si}
\sin\theta_{si} \quad i \neq j, \label{eqn:sys2}
\end{eqnarray}
} where $P_i$ and $Q_i$ are the real and reactive power injection at
bus $i$ respectively, $P_{ij}$ and $Q_{ij}$ are the real and
reactive power flow from bus $i$ to bus $j$, $E_i$ and $\delta_i$
are the voltage magnitude and angle at bus $i$. $Y_{ij}$ and
$\theta_{ij}$ are the magnitude and phase angle of admittance from
bus $i$ to bus $j$, $Y_{si}$ and $\theta_{si}$ are the magnitude and
angle of the shunt admittance of line at bus $i$. Given a power
system, all $Y_{ij}$, $\theta_{ij}$, $Y_{si}$ and $\theta_{si}$ are
known.

For a $k'$-bus system, we treat one bus as the reference bus and set
the voltage angle at the reference
bus to be zero. 
There are $m=2k'-1$ state variables with the first $k'$ variables
for the bus voltage magnitudes $E_i$ and the
rest $k'-1$ variables for the bus voltage angles $\theta_i$. 
Let $\bfx \in {R}^{m}$ denote the state variables and let $\bfy \in
{R}^{n}$ denote the  $n$ measurements of the real and reactive power
injection and power flow.
 Let $\bfv \in {R}^n$ denote the noise
and $\bfe \in {R}^n$ denote the sparse error vector. Then we can
write the equations in a compact form,
\begin{equation}
\bfy=h(\bfx)+\bfv+\bfe,
\end{equation}
where $h(\cdot)$ denotes $n$ nonlinear functions defined in
(\ref{eqn:sys1}) to (\ref{eqn:sys2}). We apply the iterative $\ell_1$-minimization algorithm in Section \ref{subsec:algorithmdescription} to recover $\bfx$ from $\bfy$.

\begin{figure*}[t]
\centering
\includegraphics[scale=0.4]{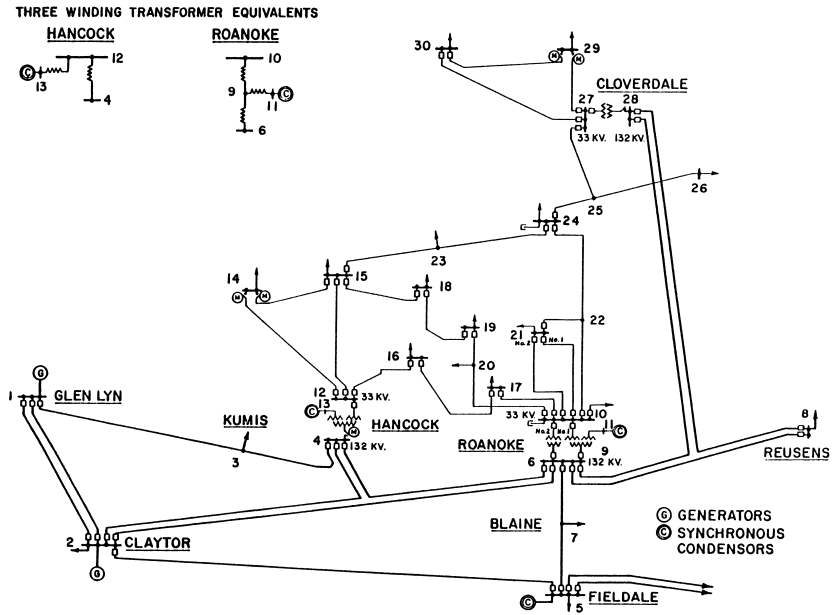}
\caption{IEEE 30-bus test system}\label{fig:bus}
\end{figure*}

We evaluate the performance on the IEEE 30-bus test system. Fig.
\ref{fig:bus} shows the structure of the test system. Then the state
vector $\bfx$ contains $59$ variables. Its first thirty entries correspond to the normalized bus voltage magnitudes, and are all close to 1. Among the thirty buses in this example, the minimum bus voltage magnitude is 0.992, and the maximum bus voltage magnitude is 1.082. The last twenty-nine entries of $\bfx$ correspond to the relative bus voltage angles to a  reference bus. Here, the bus voltage angles are in the range -0.0956 to -0.03131.  We take $n=100$ measurements
including the real and reactive power injection at each
bus and some of the real and reactive power flows on the lines. We first characterize the dependence of the estimation performance on the noise level when the number of erroneous measurements is fixed. For each fixed $\rho$,
 we randomly choose a
set $T$ 
with cardinality $|T|=\rho n$. Each measurement with its index in
$T$ contains a Gaussian error independently drawn from
${N}(0,0.5^2)$.
Each measurement also contains a Gaussian noise independently drawn
from ${N}(0,\sigma^2)$. For a given noise level $\sigma$, we apply the iterative $\ell_1$-minimization algorithm in Section \ref{subsec:algorithmdescription}
to recover the state vector $\bfx$ with $\epsilon=C\sigma$ where $C$ is defined in (\ref{eqn:C}). The initial estimate $\bfx^0$ is chosen to have `1's in its first thirty entries and `0's in its last twenty-nine entries. The relative error of the initial estimate is $\|\bfx^0-\bfx\|_2/\|\bfx\|_2=0.2447$.  
For example, in  one realization when $\rho=0.02$ and $\sigma=0$,  the iterative method takes seven iterations to converge. Let $\bfx^k$ be the estimate of $\bfx$ after the $k$th iteration, $\bfx^7$ is treated as the final estimate $\bfx^*$. The relative estimation errors $\|\bfx^k-\bfx\|_2/\|\bfx\|_2$ are  $0.1208$,   $0.0179$,    $0.0049$,   $0.0017$,   $0.00017$, $3 \times 10^{-6}$, and  $3\times 10^{-8}$ respectively.
 Fig.
\ref{fig:lambda2sigma} shows the relative estimation error $\|\bfx^*-\bfx\|_2/\|\bfx\|_2$ against $\sigma$ for various $\rho$. The result is averaged over two hundred runs. Note that when $\sigma=0$, i.e., the measurements contain sparse errors but no observation noise, the relative estimation error is not zero. For example,  $\|\bfx^*-\bfx\|_2/\|\bfx\|_2$ is 0.018 when $\rho$ is 0.02. That is because the system in Fig.
\ref{fig:bus} is not proven to satisfy the condition in Theorem \ref{thm:l1iterative}, and the exact recovery with sparse bad measurements is not guaranteed here. However, our next simulation result indicates that our method indeed outperforms some existing method in some cases.  
\begin{figure}[t]
\centering
\includegraphics[scale=0.4]{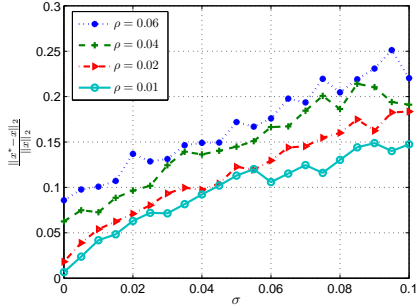}
\caption{Estimation error versus $\sigma$ with fixed percentage of
errors in power system}\label{fig:lambda2sigma}
\end{figure}

We compare our proposed method with two other recovery methods. In the first method, we assume we know the location of the errors,  i.e., the support $T$ of the sparse vector $\bfe$. We delete these erroneous measurements and apply the conventional Weighted Least Squares (WLS) method (Section III-A of \cite{BC}) to estimate system states based on the remaining measurements. The solution minimizing the weighted least squares function requires the application of an iterative method, and we follow  the updating rule in \cite{BC} (equations (6) and (7)). In the second method, we apply the $\hat{b}$ test in \cite{MG} to detect the erroneous measurements. It first applies the WLS method to estimate the system states, then applies a statistical test to each measurement. If some measurement does not pass the test, it is identified as a potential erroneous measurement, and the system states are recomputed by WLS based on the remaining measurements. This procedure is repeated until every remaining measurement passes the statistical test. Figure \ref{fig:compare} shows the recovery performance of three different methods when $\rho=0.02$. The result is averaged over two-hundred runs. The estimation error of our method is less than that of  the $\hat{b}$ test. WLS method with known error location has the best recovery performance. 

\begin{figure}[t]
\centering
\includegraphics[scale=0.4]{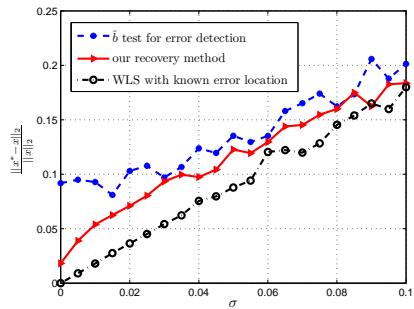}
\caption{Estimation error of three recovery methods}\label{fig:compare}
\end{figure}

\begin{figure}
\centering
\includegraphics[scale=0.4]{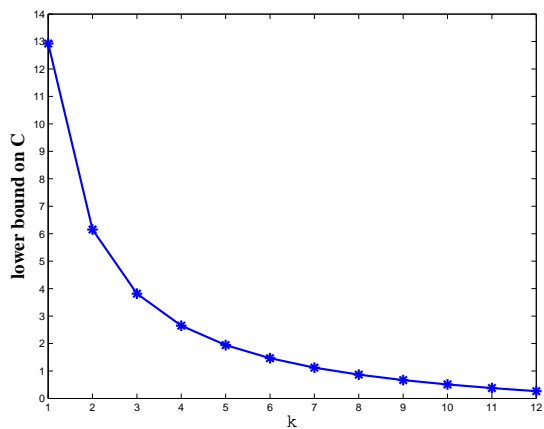}
\caption{Lower bound on $C$ when $n=200$, $m=50$ and $\epsilon=10^{-3}$}\label{fig:kCrelation}
\end{figure}

\textbf{Verification of the Conditions for Nonlinear Sparse Error Correction:} We use the verification methods in Subsection \ref{subsec:verification} to verify the conditions for nonlinear sparse error correction. The Jacobian matrix $H_0$ for the true state is generated a $200 \times 50$ matrix with i.i.d. standard Gaussian entries. We also assume that each row of $\Delta H=H^{local}-H_0$ is upper bounded by $\epsilon=10^{-3}$ in $\ell_2$ norm. We then plot the sparsity $k$ and the corresponding lower bound on $C(k)$ in Figure \ref{fig:kCrelation} such that for every $H^{local}$, every index set $K$ with cardinality $|K|=k$, and every vector $\z$ in the subspace spanned by $H^{local}$, we always have $C\|\z_{K}\|_1 \leq \|\z_{\overline{K}}\|_1$. We can see that for $k=7$, $C$ is lower bounded by $1.1222$. So when there are no more than $7$ bad data entries, there exists a convergence region around the true state such that the iterative algorithm converges to the true state once the iterative algorithm gets into that convergence region. Of course, we remark that the computation here is pretty conservative and in practice, we can get much broader convergence scenarios than this theory predicts. However, this is a meaningful step towards establishing sparse recovery conditions for nonlinear measurements.



\section*{Acknowledgment}
The research is supported by NSF under CCF-0835706, ONR under N00014-11-1-0131, and DTRA under HDTRA1-11-1-0030.

\bibliographystyle{IEEEbib}

\begin{thebibliography}{1}
\bibitem{AE04} A. Abur and A. G. Exposito, \emph{Power System State
Estimation: Theory and Implementation}, 1st ed. CDC press, 2004.
\bibitem{Aspremont}
A. Aspremont, L. El Ghaoui, ``Testing the Nullspace Property using Semidefinite Programming,'' \emph{Mathematical Programming},vol. 127, pp. 123-144,2011.
\bibitem{blanchard}
J. D. Blanchard, C. Cartis, and J. Tanner, ``Compressed sensing: How
sharp is the restricted isometry property?,'' SIAM Rev., vol. 53, no. 1, pp. 105–125, 2011.
\bibitem{Blumensath} T. Blumensath, ``Compressed Sensing with Nonlinear Observations,'' {\it http://eprints.soton.ac.uk/164753/}
\bibitem{BC} Anjan Bose and Kevin Clements, ``Real-Time Modeling of Power Networks,'' \emph{Proceedings of IEEE}, 75(12), 1607-1622, 1987
\bibitem{FB} Broussolle, ``State estimation in power systems: Detecting bad data through the sparse inverse matrix method,''\emph{ IEEE Trans. Power App. Syst.}, vol. PAS-94, pp. 329-337, Mar./Apr. 1975.

\bibitem{CandesErrorCorrection} E. Cand\`{e}s and P. Randall, ``Highly robust error correction by convex programming,'' {\it IEEE Transactions on Information Theory}, vol. 54, pp. 2829-2840, 2008.

\bibitem{CT1} E. Cand\`{e}s and T. Tao,
``Decoding by linear programming'', {\it IEEE Trans. on Information
Theory}, 51(12), pp. 4203 - 4215, December 2005.

\bibitem{GarnaevGluskin} A. Yu. Garnaev and E. D. Gluskin,``The widths of a Euclidean ball,'' (Russian), Dokl. Akad. Nauk SSSR 277 (1984), no. 5, 10481052. (English translation: Soviet Math. Dokl. 30 (1984), no. 1, 200204.)

\bibitem{Neighborlypolytope}
D. Donoho, ``High-dimensional centrally symmetric polytopes with
neighborliness proportional to dimension,'' Discrete and Comput.
Geom., vol. 35, no. 4, pp. 617–652, 2006.
\bibitem{DonohoMalekiMontanari} D. Donoho, A.Maleki, and A.Montanari, ``The noise-sensitivity phase
transition in compressed sensing,'' arXiv:1004.1218 2010.
\bibitem{DonohoTanner}
D. L. Donoho and J. Tanner, "Neighborliness of randomly projected
simplices in high dimensions," in Proc. Natl. Acad. Sci. USA, 2005,
vol. 102, no. 27, pp. 9452–9457.

\bibitem{HSKF75}  E. Handschin, F.C. Schweppe,J. Kohlas, A. Fiechter,
``Bad data analysis for power system state estimation,'' \emph{IEEE
Trans. Power App. Syst.}, vol.94, no.2, pp. 329-337, Mar 1975

\bibitem{Geman80} S. Geman, ``A limit theorem for the norm of random matrices,'' \emph{Annals of Probability}, vol. 8, No. 2, pp.252-261, 1980.
\bibitem{Gordon} Y. Gordon, ``On Milman's inequality and random subspaces with escape through a mesh in $R^{n}$'', {\it Geometric Aspect of Functional Analysis, Isr. Semin. 1986-87, Lect. Notes Math}, 1317,1988.
\bibitem{JYLLM00} J. Jiang, J. Yang, Y. Lin, C. Liu, J. Ma, ``An
adaptive PMU based fault detection/location technique for transmission
lines. I. Theory and algorithms,'' \emph{IEEE Trans.  Power Delivery},
vol.15, no.2, pp.486-493, Apr 2000.
\bibitem{Juditsky}
A. Juditsky and A. Nemirovski, On verifiable sufficient conditions for sparse signal recovery
via $\ell_1$ minimization, Mathematical Programming, vol. 127, pp. 57-88, 2011.

    \bibitem{Kasin} B.S. Kasin, ``The widths of certain finite-dimensional sets and classes of smooth functions,'' (Russian), Izv. Akad. Nauk SSSR Ser. Mat. 41, 334-351 (1977). (English translation: Math. USSR-Izv. 11 (1977), no. 2, 317-333 (1978)).

    \bibitem{TL10a} O. Kosut, L. Jia, R. J. Thomas, and L. Tong, "On malicious data attacks on power system state estimation,"
\emph{Proceedings of  UPEC},Cardiff, Wales, UK, 2010.
\bibitem{TL10b}  O. Kosut, L. Jia, R. J. Thomas, and L. Tong, ``Malicious data attacks on smart grid state estimation: attack strategies and countermeasures," \emph{Proceedings of IEEE 2010 SmartGridComm}, 2010.

    \bibitem{KV82} W. Kotiuga and M. Vidyasagar, ``Bad data rejection properties of weighted least absolute value techniques applied to static state estimation,'' \emph{ IEEE Trans. Power App. Syst.}, vol. PAS-101, pp. 844-853, Apr. 1982.
\bibitem{LNR11} Y. Liu, P. Ning, and M. Reiter, ``False data injection
attacks against state estimation in electric power grids'', \emph{ACM
Trans. Inf. Syst. Secur.}, vol.14, no.1, pp. 13:1-13:33, May 2011.

\bibitem{Marcenko67} V.~A.~Mar\v{c}enko and L.~A.~Pastur, ``Distributions of eigenvalues for some sets of random matrices,'' Math. USSR-Sbornik, vol. 1, pp. 457-483, 1967.
\bibitem{M00} A. Monticelli,  ``Electric power system state
estimation,'' \emph{Proceedings of the IEEE}, vol.88, no.2,
pp.262-282, Feb 2000

\bibitem{MG} A. Monticelli and A. Garcia, ``Reliable bad data processing for real time state estimation,'' \emph{IEEE Trans. Power App. Syst.}, vol. PAS-102, pp.1126-1139, May 1983.






\bibitem{RudelsonVershynin} M. Rudelson and R. Vershynin, ``On sparse reconstruction from Fourier and Gaussian measurements'', {\it Comm. on Pure and Applied Math.}, 61(8), 2007.
\bibitem{SH74} F.C. Schweppe, E.J. Handschin, ``Static state
estimation in electric power systems,'' \emph{Proceedings of the
IEEE}, vol.62, no.7, pp. 972-982, July 1974

\bibitem{silver} J. Silverstein, ``The smallest eigenvalue of a large dimensional Wishart matrix,'' \emph{Annals of Probability}, vol. 13, pp. 1364-1368, 1985

\bibitem{StojnicThresholds} M. Stojnic, ``Various thresholds for $\ell_1$-optimization in Compressed Sensing,'' {\it http://arxiv.org/abs/0907.3666}

\bibitem{Tangverify} G. Tang and A. Nehorai, ``Verifiable and computable performance analysis of sparsity recovery,'' http://arxiv.org/pdf/1102.4868v2.pdf.

\bibitem{XuHassibi}  W. Xu and B. Hassibi, ``Precise Stability Phase Transitions for $\ell_1$ Minimization: A Unified Geometric Framework,''  {\it IEEE Transactions on Information Theory}, vol. 57, pp. 6894-6919, 2011.
\bibitem{CDC2011} W. Xu, M. Wang, and A. Tang, ``On state estimation with bad data detection,'' {\it Proceedings of IEEE Conference on Decision and Control}, 2011.
\bibitem{Yin} Y. Zhang, ``A simple proof for recoverability of $\ell_1$-minimization: go over or under?'' Technical Report, {\it http://www.caam.rice.edu/$\sim$yzhang/reports/tr0509.pdf}.

\end{thebibliography}

\end{document}